%%%%%%%%%%%%%%%%%%%%%%%%%%%%%%%%%%%%%%%%%%%%%%%%%%%%%%%%%%%%
%\include{frontmatter}
\documentclass[11pt,letterpaper]{article}
%%%%%%%%%%%%%%%%%%%%%%%%%%%%%%%%%%%%%%%%%%%%%%%%%%%%%%%%%%%%
\usepackage{amsfonts}
\usepackage{amsmath,amssymb,color}
\usepackage{graphicx}
\usepackage[singlespacing]{setspace}

\usepackage{natbib}
\newcommand{\jc}[1]{{{\small \color{red}\sc  (#1)}}}
 \oddsidemargin=0in
 \evensidemargin=0in
 \topmargin=0in
 \headsep=0in
 \headheight=0in
 \textheight=9in
 \textwidth=6.5in

\def\AA{\mbox{$\mathbf A$}}

\newcommand{\EE}{\mbox{$\mathbf E$}}
\newcommand{\ee}{\mbox{$\mathbf e$}}

\newcommand{\pp}{\mbox{$\mathbf p$}}

\newcommand{\RR}{\mbox{$\mathbf R$}}

\newtheorem{theorem}{Theorem}

\newtheorem{definition}{Definition}

\newtheorem{lemma}{Lemma}

\newtheorem{proposition}{Proposition}
\newtheorem{remark}{Remark}

\newenvironment{proof}[1][Proof]{\noindent\textbf{#1.} }{\ \rule{0.5em}{0.5em}}

\begin{document}

\title{Learning Item-Attribute Relationship in $Q$-Matrix Based Diagnostic Classification Models}

\author{Jingchen Liu, Gongjun Xu, and Zhiliang Ying\\\\ Columbia University}

\maketitle

\begin{abstract}
Recent surge of interests in cognitive assessment has led to the
developments of novel statistical models for diagnostic
classification. Central to many such models is the well-known
$Q$-matrix, which specifies the item-attribute relationship. This
paper proposes a principled estimation procedure for the
$Q$-matrix and related model parameters. Desirable theoretic
properties are established through large sample analysis. The
proposed method also provides a platform under which important
statistical issues, such as hypothesis testing and model
selection, can be addressed.

\noindent{\it Keywords:} Cognitive
assessment, consistency, DINA model, DINO model, latent traits,
model selection, optimization, self-learning,
statistical estimation.
\end{abstract}

%\vspace{\fill}\pagebreak

\section{Introduction}
Diagnostic classification models (DCM) are important statistical tools in cognitive diagnosis and have widespread applications in educational measurement, psychiatric evaluation, human resource development, and many other areas in science, medicine, and business. A key component in many such models is the so-called $Q$-matrix, first introduced by \cite{Tatsuoka1983}; see also \cite{Tatsuoka} for a detailed coverage. The $Q$-matrix specifies the item-attribute relationship, so that responses to items can reveal attributes configuration of the respondent. In fact, \cite{Tatsuoka1983, Tatsuoka} proposed the rule space method that is simple and easy-to-use.

Flexible and sophisticated statistical models can be built around the $Q$-matrix. Two such models are the DINA model  \citep[Deterministic Input, Noisy Output ``AND'' gate; see][]{Junker} and the DINO model \citep[Deterministic Input, Noisy Output ``OR'' gate; see][]{Templin2006, Templin}. Other important developments can be found in \cite*{Tatsuoka1985,DiBello,Junker, Hartz,TatsuokaC,AHM,Templin2006,Chiu}.  \cite*{Rupp} contains a comprehensive summary of many classical and recent developments.

There is a growing literature on the statistical inference of
$Q$-matrix based DCMs that addresses the issues of estimating item
parameters when the $Q$-matrix is prespecified
\citep*{Rupp2002,Henson, RoussosTH, Stout2007}. Having a correctly
specified $Q$-matrix is crucial both for parameter estimation
(such as the slipping, guessing probability, and the attribute
distribution) and for the identification of subjects' underlying
attributes. As a result, these approaches are sensitive to the
choice of the $Q$-matrix  \citep{Rupp20082, dela2, dela}. For
instance, a misspecified $Q$-matrix may lead to substantial lack
of fit and, consequently, erroneous attribute identification.
Thus, it is desirable to be able to detect misspecification and to
obtain a data driven $Q$-matrix.

In contrast, there has not been much work about estimation of the
$Q$-matrix. To our knowledge, the only rigorous treatment of the
subject  is given by  \cite*{LXY2011}, which defines an estimator
of the $Q$-matrix under the DINA model assumption and provides
regularity conditions under which desirable  theoretical
properties are established. The work of this paper may be viewed
as the continuation of \cite{LXY2011} in the sense that it
completes the estimation of the $Q$-matrix for the DINA model and
extends the estimation procedure (as well as the consistency
results) to the DINO model. The DINA and the DINO models impose
rather different interactions among attributes. However, we show
that there exists a duality between the two models. This
particular feature is interesting especially for theoretical
development, as it allows us to adapt the results and analysis
techniques developed for the DINA model to the DINO model without
much additional effort. This will be shown in our technical
developments.

The main contribution of this paper is two-fold. First, it
provides a rigorous analysis of the $Q$-matrix for the DINA model
when both the slipping and guessing parameters are unknown. This
is a substantial extension of the results in  \cite{LXY2011} which
requires a complete knowledge of the guessing parameter. It gives
a definitive answer to the estimability of the $Q$-matrix for the
DINA model by presenting a set of sufficient conditions under
which a consistent estimator exists. Second, we conduct a parallel
analysis (to the analysis for the DINA model) for the DINO model.
In particular, a consistent estimator of the $Q$-matrix for the
DINO model and its properties are presented. Thanks to the duality
structure, part of the intermediate results developed for the DINA
model can be borrowed to the analysis of the DINO model.

One may notice that our estimation procedure is in fact generic in
the sense that it is implementable to a large class of DCMs
besides the DINA and DINO models.  In particular, the procedure is
implementable to the NIDA (Noisy Inputs, Deterministic ``And''
Gate) model and the NIDO (Noisy Inputs, Deterministic ``Or'' Gate)
model among others, though theoretical properties under such model
specifications still need to be established. In addition to the
estimation of the $Q$-matrix, we emphasize that the idea behind
the derivations forms a principled inference framework. For
instance, during the course of the description of the estimation
procedure, necessary conditions for a correctly specified
$Q$-matrix are naturally derived. Such conditions can be used to
form appropriate statistics for hypothesis testing and model
diagnostics. In that connection, additional developments (e.g. the
asymptotic distributions of those statistics) are needed, but they
are not the focus of the current paper. Therefore, the proposed
framework can potentially serve as a principled inference tool for
the $Q$-matrix in diagnostic classification models.

This paper is organized as follows. Section \ref{SecMain} contains
the main ingredient: presentation of the estimation procedures for
both the DINA and DINO models and the statement of the consistency
results. Section \ref{SecDisc} includes further discussions of the
theorems and various issues. The proofs of the main theorems in
Section \ref{SecMain} and several important propositions are given
in Section \ref{SecProof}. The most technical proofs of two
central propositions are given in the Appendix.

\section{Main results}\label{SecMain}

\subsection{Notation and model specification}

The specification of the diagnostic classification models
considered in this paper consists of the following concepts.

\emph{Attribute:} subject's underlying mastery of certain skills or presence of certain mental health conditions. There are $k$ attributes and we use $\AA= (A^1,...,A^k)^\top$ to denote the vector of attributes, where $A^j=1$ or $0$, indicating presence or absence of the $j$-th attribute, $j=1, \dots, k$.

\emph{Responses to items:} There are $m$ items and we use $\RR = (R^1,...,R^m)$ to denote the vector of responses to them. For simplicity, we assume that $R^j\in \{0,1\}$ is a binary variable for each $j=1,...,m$.

Note that both $\AA$ and $\RR$ are subject specific. Throughout this paper, we assume that the number of attributes $k$ is known and that the number of items $m$ is always observed.

\emph{$Q$-matrix:} the link between the items and the attributes.
In particular, $Q=(Q_{ij})_{m\times k}$ is an $m\times k$ matrix
with binary entries. For each $i$ and $j$, $Q_{ij}=1$ indicates
that  item $i$ requires attribute $j$ and $Q_{ij}=0$ otherwise.

We define \emph{capability indicator}, $\xi(\AA,Q)$, which
indicates if a subject possessing attribute profile $\AA$ is
capable of providing a positive response to item $i$ if the
item-attribute relationship is specified by matrix $Q$. Different
capability indicators give rise to different DCMs. For instance,
\begin{equation}\label{xidina}
\xi^i_{DINA}(\AA,Q)= \mathbf 1 (\mbox{$A^j \geq Q_{ij}$ for all $j=1,...,k$})
\end{equation}
is associated with the DINA model, where $\mathbf 1$ is the usual indicator function.
The DINA model assumes conjunctive relationship among attributes, that is, it is necessary to possess all the attributes indicated by the $Q$-matrix to be capable of providing  a positive response to an item. In addition, having additional unnecessary attributes does not compensate for the lack of the necessary attributes. The DINA model is particularly popular in the context of educational testing.

Alternative to the ``and'' relationship, one may impose an ``or'' relationship  among the attributes, resulting in the DINO model. The corresponding capability indicator takes  the following form
\begin{equation}\label{xidino}
\xi^i_{DINO}(\AA,Q)= \mathbf 1 (\mbox{there exists a $j$ such that $A^j \geq Q_{ij}$}).
\end{equation}
That is, one needs to possess at least one of the required
attributes to be capable of responding positively to that item.

The last ingredient of the model specification is related to the so-called slipping and guessing parameters. The names ``slipping'' and ``guessing'' arise from the educational applications. The slipping parameter is the probability that a subject (with attribute profile $\AA$) responds negatively to an item if the capability indicator to that item $\xi_{DINA}(\AA,Q) =1$; similarly, the guessing parameter refers to the probability that a subject's responds positively if his/her capability indicator $\xi_{DINA}(\AA,Q)=0$. We use $s$ to denote the slipping probability and $g$ to denote the guessing probability (with corresponding subscript indicating different items). In the technical development, it is more convenient to work with the complement of the slipping parameter. Therefore, we define $c= 1- s$ to be the correctly answering probability, with $s_i$ and $c_i$ being the corresponding item-specific notation. Given a specific subject's profile $\AA$, the response to item $i$ under the DINA model follows a Bernoulli distribution
\begin{equation} \label{prob}P(R^i = 1|\AA) = c_i^{\xi^i_{DINA}(\AA,Q)}g_i^{1-\xi^i_{DINA}(\AA,Q)}.\end{equation}
With the same definition of $c_i$ and $g_i$, the response under the DINO model follows
\begin{equation} \label{probdino}P(R^i = 1|
\AA) = c_i^{\xi^i_{DINO}(\AA,Q)}g_i^{1-\xi^i_{DINO}(\AA,Q)}.\end{equation}
In addition, conditional on $\AA$, $(R^1,...,R^m)$ are jointly independent.

Lastly, we use subscripts to indicate different subjects. For instance, $\RR_r= (R^1_r,...,R^m_r)^\top$ is the response vector of subject $r$. Similarly, $\AA_r$ is the attribute vector of subject $r$. With $N$ subjects, we observe $\RR_1,...,\RR_N$ but not $\AA_1, ..., \AA_N$.  Thus, we finished our model specification.

\subsection{Estimation of the $Q$-matrix}

In this section, we develop a general approach to the estimation of the $Q$-matrix and item parameters. We first deal with the DINA model and then, via introducing a duality relation, the DINO model.

\subsubsection{DINA model}

We need to introduce additional notation and concepts. Throughout
the discussion, we use $Q$ to denote the true matrix and $Q'$ to
denote a generic $m\times k $ binary matrix.

\emph{Attribute distribution.} We assume that the subjects are a random sample (of size $N$) from a designated population so that their attribute profiles, $\AA_r$, $r=1,..., N$ are i.i.d. random variables, with the following distribution
\begin{equation}\label{AttPop}P(\AA_r = \AA)= p_{\AA},\end{equation}
where, for each $\AA\in\{0,1\}^{k}$, $p_{\AA}\in [0,1]$ and $\sum_{\AA}p_{\AA}=1$.
We use $\pp=(p_{\AA}: \AA\in \{0,1\}^k)$ to denote the distribution  of the attribute profiles.

\emph{The $T$-matrix.} The $T$-matrix is a non-linear function of the $Q$-matrix and provides a linear relationship between the attribute distribution and the response distribution. In particular, let $T(Q)$ be a matrix of $2^k$ columns. Each column of $T$ corresponds to one attribute profile $\AA\in \{0,1\}^k$.  To facilitate the description, we use binary vectors of length $k$ to label the the columns of $T(Q)$ instead of using ordinal numbers. For instance, the $\AA$-th column of $T(Q)$ is the column that corresponds to attribute $\AA$.

Let $I_i$ be a generic notation for a positive response to item
$i$. Let ``$\wedge$'' stand for ``and'' combination. For instance,
$I_{i_1}\wedge I_{i_2}$ denotes  positive responses to both item
$i_1$ and $i_2$. Each row of $T(Q)$ corresponds to one item or one
``and'' combination of items, for instance, $I_{i_1}$,
$I_{i_1}\wedge I_{i_2}$, or $I_{i_1}\wedge I_{i_2}\wedge
I_{i_3}$,... For $T(Q)$ containing all the single items and all
``and'' combinations, it has $2^m-1$ rows. We will later say that
such a $T(Q)$ is \emph{saturated}.

We now proceed to the description of each row vector of $T(Q)$. We
define $B_Q(I_i)$ to be a $2^k$ dimensional row vector. Using the
same labeling system as that of the columns of $T(Q)$, the
$\AA$-th element of $B_Q(I_i)$ is defined as
$\xi^i_{DINA}(\AA,Q)$, that is, this element indicates if a
subject with attribute $\AA$ is capable of responding positively
to item $i$. Thus, $B_{Q}(I_{i})$ is the vector indicating the
attribute profiles that is capable of responding positively to
item $i$.

Using a similar notation, we define that
\begin{equation}
B_{Q}(I_{i_{1}}\wedge ...\wedge I_{i_{l}})=\Upsilon
_{h=1}^{l}B_{Q}(I_{i_{h}}),  \label{ProdB}
\end{equation}
where the operator ``$\Upsilon _{h=1}^{l}$'' is element-by-element multiplication from $B_Q(I_{i_1})$ to $B_Q(I_{i_l})$. For instance,
$$W=\Upsilon_{h=1}^l V_h$$
means that $W^j = \prod_{h=1}^l V_{h}^j$, where $W=(W^1,...,W^{2^k-1})$ and $V_h=(V_h^1,...,V_h^{2^k-1})$. Therefore, $B_Q(I_{i_1}\wedge ... \wedge I_{i_l})$ is the vector indicating the attributes that are capable of responding positively to items $i_1,...,i_l$. The row in $T(Q)$ corresponding to $I_{i_{1}}\wedge ...\wedge I_{i_{l}}$ is $B_{Q}(I_{i_{1}}\wedge...\wedge I_{i_{l}})$.

\emph{$\alpha$-vector.} We let $\alpha $ be a column vector whose
length is equal to the number of rows in $T(Q)$. Each component in
$\alpha$ corresponds to a row vector of $T(Q)$. The element in
$\alpha$ corresponding to $I_{i_{1}}\wedge ...\wedge I_{i_{l}}$ is
$N_{I_{i_{1}}\wedge...\wedge I_{i_{l}}}/N$, where
$N_{I_{i_{1}}\wedge...\wedge I_{i_{l}}}$ denotes the number of
people with positive responses to items $i_1,...,i_l$, that is
$$N_{I_{i_{1}}\wedge ...\wedge I_{i_{l}}}= \sum_{r=1}^N \prod_{j=1}^lR_r^{i_j}.$$

\paragraph{No slipping or guessing.} We first consider a simplified situation in
which both the slipping and guessing probabilities are zero.  Under this special situation, \eqref{prob} implies that
$$R^i_r = \xi^i_{DINA} (\AA_r), \quad i=1,...,m; \; r=1,..., N. $$
In other words, the probabilistic relationship becomes a certainty
relationship. We further let $\hat {\pp}=\{\hat p_{\AA}: \AA\in
\{0,1\}^k\}$ be the (unobserved) empirical distribution of the
attribute profiles, that is,
$$\hat p_{\AA} = \frac 1 N \sum_{r=1}^N \mathbf 1 (\AA_r = \AA).$$
Note that each row vector of $T(Q)$ indicates the attribute profiles that are capable of responding positively to the corresponding item(s). Then, for each set of $i_1,...,i_l$, we may expect the following identity
$$\frac{N_{i_1\wedge...\wedge i_l}}N =  B_Q(I_{i_1}\wedge .... \wedge I_{i_l}) \hat {\pp},$$
where $B_Q$ is a row vector and $\hat {\pp}$ is a column vector. Therefore, thanks to the construction of $T(Q)$ and vector $\alpha$, in absence of possibility of slipping and guessing, we may expect the following set of linear equations holds
$$T(Q)\hat {\pp} = \alpha.$$
Note that $\hat {\pp}$ is not observed. The above display implies that if the $Q$-matrix is correctly specified and the slipping and guessing probabilities are zero, then the linear equation $T(Q)\pp = \alpha$ (with $\pp$ being the variable) has at least one solution. For each binary matrix $Q'$, we define that
$$S(Q')= \inf_{\pp} |T(Q') \pp- \alpha|,$$
where the minimization is subject to the constraints that
$p_{\AA}\in [0,1]$ and $\sum_{\AA}p_{\AA}=1$. Based on the above
results, we may expect that $S(Q)=0$ and therefore $Q$ is one of
the minimizers of $S(Q)$. In addition, the empirical distribution
$\hat {\pp}$ is one of the minimizers of $|T(Q) \pp- \alpha|$.
Therefore, we just derived a set of necessary conditions for a
correctly specified $Q$-matrix. In our subsequent theoretical
developments, we will show that under some circumstances these
conditions are also sufficient.

\paragraph{Illustrative example.} To aid the understanding of the $T$-matrix, we provide one simple example. Consider the
following $3\times 2$ $Q$-matrix,
\begin{equation}\label{Q}
Q=\quad
\begin{tabular}{ccc}
\hline\hline & addition & multiplication\\ \hline
$2+3$ & $1$ & $0$ \\
$5\times 2$ & $0$ & $1$ \\
$(2+3)\times 2$ & $1$ & $1$\\\hline
\end{tabular}
\end{equation}
and the contingency table of attributes
\begin{center}
\begin{tabular}{cc}
\hline\hline
& multiplication \\
addition &
\begin{tabular}{cc}
$\hat p_{00}$ & $\hat p_{01}$ \\
$\hat p_{10}$ & $\hat p_{11}$
\end{tabular}
\\ \hline
\end{tabular}
\end{center}
Note that if the $Q$-matrix is correctly specified and the
slipping and guessing probabilities are all zero we should be able
to obtain the following identities
\begin{equation}
N(\hat p_{10}+\hat p_{11})=N_{I_{1}},\quad N(\hat p_{01}+\hat
p_{11})=N_{I_{2}},\quad N\hat p_{11}=N_{I_{3}}.  \label{margin}
\end{equation}%
We then create the corresponding $T$-matrix and $\alpha$-vector as follows
\begin{equation}
T(Q)=\left(
\begin{array}{cccc}
0& 1 & 0 & 1 \\
0& 0 & 1 & 1 \\
0& 0 & 0 & 1
\end{array}%
\right) ,\quad \alpha =\left(
\begin{array}{c}
N_{I_{1}}/N \\
N_{I_{2}}/N \\
N_{I_{3}}/N
\end{array}%
\right) .  \label{Num}
\end{equation}%
The first column of $T(Q)$ corresponds to the zero attribute profile; the second corresponds to $\AA = (1,0)$; the third corresponds to $\AA=(0,1)$; and the last corresponds to $\AA= (1,1)$. The first row of $T(Q)$ corresponds to item $2+3$, the second to $5\times 2$, the third to $(2+3)\times 2$. In addition, we may further consider combinations such as
$$N\hat p_{11} = N_{I_{1}\wedge I_{2}}.$$
The corresponding $T$-matrix and $\alpha$-vector should be
\begin{equation}\label{NewT}
T(Q)=\left(
\begin{array}{cccc}
0 & 1 & 0 & 1 \\
0 & 0 & 1 & 1 \\
0 & 0 & 0 & 1 \\
0 & 0 & 0 & 1%
\end{array}%
\right) ,\quad \alpha =\left(
\begin{array}{c}
N_{I_{1}}/N \\
N_{I_{2}}/N \\
N_{I_{3}}/N \\
N_{I_{1}\wedge I_{2}}/N%
\end{array}%
\right) .
\end{equation}%
Under the DINA model assumption and $g_{i}=s_{i}= 1- c_{i}= 0$, we obtain that
$$T(Q)\hat {\pp} = \alpha.$$

\paragraph{Nonzero slipping and guessing probabilities.} We next extend the necessary conditions just derived to nonzero but known
slipping and guessing probabilities. To do so, we need to modify
the $T$-matrix. Let $T_{c,g}(Q)$ be a matrix with the same
dimension as that of $T(Q)$, with each row vector being defined
slightly differently to incorporate the slipping and guessing
probability. In particular, let
$$B_{c,g,Q}(I_i) = (c_i -g_i) B_Q (I_i) + g_i \EE$$
where $\EE= (1,...,1)$ is the row vector of ones and $c_i$ is the
positive responding probability of item $i$. In addition, we let
\begin{equation}
B_{c,g,Q}(I_{i_{1}}\wedge ...\wedge I_{i_{l}})=\Upsilon
_{h=1}^{l}B_{c,g,Q}(I_{i_{h}}),  \label{ProdBcg}
\end{equation}
Clearly, each element of $B_{c,g,Q}(I_i)$ is the probability of
observing a positive response to item $i$ for a certain attribute
profile. Likewise, elements of $B_{c,g,Q}(I_{i_{1}}\wedge
...\wedge I_{i_{l}})$ indicate the probabilities of positive
responses to items $i_1$,..., $i_l$. The row in $T_{c,g}(Q)$
corresponding to $I_{i_{1}}\wedge ...\wedge I_{i_{l}}$ is
$B_{c,g,Q}(I_{i_{1}}\wedge...\wedge I_{i_{l}})$. To facilitate our
statement, we define that
\begin{equation}\label{Tc}
T_{c}(Q) = T_{c,\mathbf 0}(Q),
\end{equation}
where $\mathbf 0 = (0,...,0)^\top$ is the zero vector. That is,
$T_c(Q)$ is the matrix $T_{c,g}(Q)$ with guessing probabilities
being zero.

Recall that $\pp$ is the attribute distribution. Thus,
\begin{equation*}P(R^{i_1}=1,...,R^{i_l}=1)= E(P(R^{i_1}=1,...,R^{i_l}=1|\AA)) = B_{c,g,Q}(I_{i_{1}}\wedge ...\wedge I_{i_{l}}) \pp. \end{equation*}
Further, we obtain that
\begin{equation*} E(\alpha) = T_{c,g}(Q)\pp.\end{equation*}
In presence of slipping and guessing, one cannot expect to solve
equation $T_{c,g}(Q)\pp = \alpha$ exactly the same way as in the
case of no guessing and slipping. On the other hand, thanks to the
law of large numbers, we obtain that $\alpha \rightarrow
E(\alpha)$ as $N\rightarrow \infty$. Then this equation can be
solved asymptotically. Thus, for a generic $Q'$, we defined the
loss function
\begin{equation}\label{scorecg}S_{c,g}(Q') = \inf_{\pp} |T_{c,g}(Q') \pp- \alpha|,\end{equation}
where the above optimization is subject to the constraint that
$p_{\AA}\in [0,1]$ and $\sum_{\AA} p_{\AA} =1$ and $|\cdot|$ is
the Euclidean normal. In view of the preceding argument, we expect
that
\begin{equation}
S_{c,g}(Q)\rightarrow 0
\end{equation}
almost surely as $N\rightarrow \infty$, that is, the true
$Q$-matrix asymptotically minimizes the criterion function
$S_{c,g}$. This leads us to propose the following estimator of $Q$
\begin{equation}\label{est}
\hat Q(c,g) =\arg \inf_{Q'} S_{c,g}(Q'),
\end{equation}
where $(c,g)$ is included in $\hat Q$ to indicate that the
resulting estimator requires the knowledge of the correct
responding and guessing probabilities.

\paragraph{Situations when $c$ and $g$ are unknown.}
Suppose that for a given $Q'$, we can construct an estimator
$(\hat c(Q'),\hat g(Q'))$ of $(c,g)$. In addition, suppose that
$(\hat c(Q),\hat g(Q))$ is consistent, that is, $(\hat c(Q), \hat
g(Q)) \rightarrow (c,g)$ in probability as $N\rightarrow \infty$.
Then, we define
\begin{equation}\label{estc}
\hat Q_{\hat c, \hat g} = \arg\inf_{Q'} S_{\hat c(Q'),\hat g(Q')}(Q'),
\end{equation}
that is, we plug in the estimator of $(c,g)$ into the objective function in \eqref{est}. We will present one specific choice of $(\hat c, \hat g)$ in Section \ref{SecPara}.
%
%\paragraph{Both $c$ and $g$ are unknown.} When both $c$ and $g$ are unknown, we may apply a similar strategy if a consistent estimator of $(c,g)$ is available denoted by $(\hat c(Q), \hat g(Q)$. In particular, we define
%\begin{equation}\label{estcg}
%\hat Q_{\hat c,\hat g} = \arg\inf_{Q'} S_{\hat c(Q') , \hat g(Q')} (Q').
%\end{equation}
%However, for technical reasons, we are not able to provide a strong mode of the consistency results when both $c$ and $g$ need to be estimated from the data. This is mostly because the behavior of the $\hat g(Q')$ is unknown when $Q'$ is misspecified.

\subsubsection{DINO model}

%\subsection{Estimation procedure}
We now proceed to the description of the estimation procedure of the DINO model. The DINO can be considered as the dual model of the DINA model. The estimation procedure is similar except that the ``AND'' relationship needs to be changed to an ``OR'' relationship. In subsequent technical development, we will provide the precise meaning of the duality. First, we present the construction of the estimator.

\emph{The $U$-matrix.} The matrix $U_{c,g}(Q)$ is similar to
$T_{c,g}(Q)$ except that it admits an ``OR'' relationship among
items. In particular, first define $F_Q(I_i)$ to be a vector of
$2^k$ dimension and the $\AA$-th element is defined as
$\xi_{DINO}(\AA,Q)$. Therefore, $F_Q(I_i)$ indicates the attribute
profiles that are capable of providing positive responses to item
$i$. We use ``$\vee$'' to denote the ``OR'' combinations among
items and define
$$F_Q(I_{i_1}\vee...\vee I_{i_l}) =\EE - \Upsilon_{j=1}^l (\EE - F_Q(I_{i_j})).$$
Thus, $F_{Q}(I_{i_1}\vee...\vee I_{i_l})$ is a vector indicating
the attribute profiles that are capable of responding positively
to at least one of the item(s) $i_1$,..., $i_l$. We let the row in
$U(Q)$ corresponding to $I_{i_1}\vee...\vee I_{i_l}$ be
$F_Q(I_{i_1}\vee...\vee I_{i_l})$. In presence of slipping and
guessing, we define
$$F_{c,g,Q}(I_i) = (c_i - g_i ) F_{Q}(I_i) + g_i \EE$$
and
$$F_{c,g,Q}(I_{i_1}\vee...\vee I_{i_l}) =\EE - \Upsilon_{j=1}^l (\EE - F_{c,g,Q}(I_{i_j})).$$
We let the row in $U_{c,g}(Q)$ corresponding to ``$I_{i_1}\vee...\vee I_{i_l}$'' be $F_{c,g,Q}(I_{i_1}\vee...\vee I_{i_l})$.

\emph{The $\beta$-vector.} The vector $\beta$ plays a similar role
as the vector $\alpha$ for the DINA model. Specifically, $\beta$
is a column vector whose length is equal to the number of rows of
$U(Q)$. Each element of $\beta$ corresponds to one row vector of
$U(Q)$. The element of $\beta$ corresponding to
$I_{i_1}\vee...\vee I_{i_l}$ is defined as
$$N_{I_{i_1}\vee ... \vee I_{i_{l}}} /N= \frac 1 N\sum_{r=1}^N \mathbf 1(\mbox{there exists a $j$ such that $R_r^{i_j}=1$}).$$

With such a construction and a correctly specified $Q$, one may expect that
$$\beta \rightarrow U_{c,g}(Q) \pp$$
almost surely as $N\rightarrow \infty$. Therefore, we define
objective function
\begin{equation}\label{ScoreDINO}
V_{c,g}(Q) = \inf_{\pp} |U_{c,g}(Q) \pp - \beta|,
\end{equation}
where $\inf$ subject to $\sum_{\AA} p_{\AA} = 1$ and $p_{\AA}\in
[0,1]$. Furthermore, an estimator of $Q$ can be obtain by
\begin{equation}\label{estdino}
\tilde Q(c,g) =\arg\inf_{Q'} V_{c,g}(Q').
\end{equation}
In cases when parameters $c$ or $g$ are unknown, we may plug in
their estimates and define
\begin{eqnarray}\label{estdinopara}
\tilde Q_{\hat c, \hat g}&=&\arg\inf_{Q'} V_{\hat c(Q'), \hat g(Q')}(Q'). %\notag
\end{eqnarray}

\subsubsection{Estimators for the slipping and guessing parameters}\label{SecPara}

To complete our estimation procedure, we provide one generic estimator for $(c,g)$. For the DINA model, we let
\begin{equation}\label{cgdina}
(\hat c(Q), \hat g(Q)) = \arg\inf_{c,g\in [0,1]^m } S_{c,g}(Q);
\end{equation}
and for the DNIO model, we let
\begin{equation}\label{cgdino}
(\hat c(Q), \hat g(Q)) = \arg\inf_{c,g\in [0,1]^m } V_{c,g}(Q).
\end{equation}
We emphasize that $(\hat c, \hat g)$ may not be a consistent
estimator of $(c,g)$. To illustrate this, we present one example
discussed in \cite{LXY2011}. Consider the case of $m=k$ items with $k$ attributes and a complete matrix $Q=\mathcal I_k$, the $k\times k$ identity matrix. The degrees of freedom of a $k$-way binary table is
$2^k-1$. On the other hand, the dimension of parameters $(\mathbf
p, c,g)$ is $2^k-1 +2k$. Therefore, $\mathbf p$, $c$, and $g$ cannot be
consistently identified without additional information. This
problem is typically tackled by introducing addition parametric
assumptions such as $\mathbf p$ satisfying certain functional form
or in the Bayesian setting (weakly) informative prior
distributions  \cite*{Gelman08}. Given that the emphasis of this
paper is the inference of $Q$-matrix, we do not further
investigate the identifiability of $(\mathbf p, c,g)$. Despite the consistency issues, if one adopts the estimators in \eqref{cgdina} and \eqref{cgdino} for the estimator of $Q$ as in \eqref{estc} and \eqref{estdinopara}, the consistency results remain even if $(\hat c(Q), \hat g (Q))$ is inconsistent. We will address this issue in more details in the remarks after the statements of the main theorems.

\subsection{Theoretical properties}

\subsubsection{Notation}\label{Sectheorynotation}

To facilitate the statements, we first introduce notation and some
necessary conditions that will be referred to in later
discussions.
\begin{itemize}
    \item Linear space spanned by vectors $V_1,...,V_l$: $$\mathcal L (V_1,...,V_l)=\left\{\sum_{j=1}^l a_j V_j: a_j \in \mathbb R\right \}.$$

        %be the linear space generated by vectors $\{V_1,...,V_l\}$.
    \item For a matrix $M$, $M_{1:l}$ denotes the submatrix containing the first $l$ rows and all columns of $M$.
    \item Vector $e_i$ denotes a column vector with the $i$-th element being 1 and the rest being 0. When there is no ambiguity, we omit the length index of $e_i$.
    \item Matrix $\mathcal I_l$ denotes the $l\times l$ identity matrix.
    \item For a matrix $M$,  $C(M)$ is the linear space generated by its column vectors. It is usually called the \emph{column space} of $M$.
    \item For a matrix $M$, $C_M$ denotes the set of its column vectors and $R_M$ denotes the set of its row vectors.
    \item Vector $\mathbf 0$ denotes the zero vector, $(0,...,0)$. When there is no ambiguity, we omit the index of length.
%    \item Scalar $p_{\mathbf A}$ denotes the probability that a subject has attribute profile $\mathbf A$. For instance, $p_{10}$ is the probability that a subject has attribute one but not attribute two.
    \item Define a $2^k$ dimensional vector
    $$\mathbf p=\left(p_{\mathbf A}: \mathbf A \in \{0,1\}^k\right).$$

%    \item We use subscript to indicate different subjects. For instance, $A_l$ is the attribute of subject $l$ and $R_l$ is the response of subject $l$.
    \item For $m$ dimensional vectors $c$ and $g$, write $c\succ g$ if $c_i> g_i$
    for all $1\leq i\leq m$ and $c\ncong g$ if $c_i \neq g_i$ for all $i=1,...,m$.
    \item Matrix $Q$ denotes the true matrix and $Q'$ denotes a generic $m\times k$ binary matrix.
\end{itemize}

The following definitions will be used in subsequent discussions.

\begin{definition}\label{DefSat}
We say that $T(Q)$ is \emph{saturated} if all combinations of the form
$I_{i_1}\wedge ... \wedge I_{i_l}$, for $l=1,...,m$, are included in
$T(Q)$. Similarly, we say that $U(Q)$ is \emph{saturated} if all combinations of the form $I_{i_1}\vee ... \vee I_{i_l}$, for $l=1,...,m$, are included in
$U(Q)$.
\end{definition}

\begin{definition}\label{DefEq}
We write $Q\sim Q'$ if and only if $Q$ and $Q'$ have identical
column vectors, which could be arranged in different orders;
otherwise, we write $Q\nsim Q'$.
\end{definition}

\begin{remark}
It is not hard to show that ``$\sim$'' is an \emph{equivalence
relation}. $Q\sim Q'$ if and only if they are identical after an
appropriate permutation of the columns. Each column of $Q$ is
interpreted as an attribute. Permuting the columns of $Q$ is
equivalent to relabeling the attributes. For $Q\sim Q'$, we are not
able to distinguish $Q$ from $Q'$ based on data.
\end{remark}

\begin{definition}\label{DefComp}
A $Q$-matrix is said to be \emph{complete} if
$\{e_i:i=1,...,k\}\subset R_Q$ ($R_Q$ is the set of row vectors of
$Q$); otherwise, we say that $Q$ is \emph{incomplete}.
\end{definition}

A $Q$-matrix is complete if and only if for each attribute there
exists an item only requiring that attribute. Completeness implies
that $m\geq k$. We will show that completeness is among the
sufficient conditions to identify $Q$. In addition, it is pointed
out by \cite{Chiu} (c.f. the paper for more detailed formulation
and discussion) that the completeness of the $Q$-matrix is a
necessary condition for a set of items to consistently identify
attributes. Thus, it is always recommended to use a complete
$Q$-matrix unless additional information is available.

Listed below are assumptions which will be used in subsequent
development.

\begin{itemize}
\item [C1] Matrix $Q$ is \emph{complete}.
\item [C2] Both $T(Q)$ and $U(Q)$ are \emph{saturated}.
\item [C3] Random vectors $\mathbf A_1,...,\mathbf A_N$ are i.i.d. with the following distribution
    $$P(\mathbf A_r=\mathbf A) = p_{\mathbf A};$$
    We further let $\pp = (p_{\mathbf A}: \mathbf A \in \{0,1\}^k)$.
\item [C4] The attribute population is \emph{diversified}, that is, $\pp \succ \mathbf 0$.
\end{itemize}

\subsubsection{Consistency results}\label{SecConsis}

We first present the consistency results for the DINA model.
\begin{theorem}\label{thmcg}
Under the DINA model, suppose that conditions C1-4 hold, that is,
$Q$ is complete, $T(Q)$ is saturated, the attribute the profiles
are i.i.d., and $\pp$ is diversified.  Suppose also that the $c$
and $g$ are known. Let $S_{c,g}(Q')$ be as defined in
\eqref{scorecg} and
$$\hat Q(c,g) = \arg\inf_{Q'}S_{c,g}(Q').$$
Then,
$$\lim_{N\rightarrow \infty}P(\hat Q(c,g)\sim Q)=1.$$
In addition, with an appropriate arrangement of the column order of $\hat Q(c,g)$, let
$$\hat {\pp} = \arg\inf_{\pp'} |T_{c,g}(\hat Q(c,g))\pp' - \alpha|.$$
Then, for any $\varepsilon>0$,
$$\lim_{N\rightarrow \infty}P(|\hat{\pp}-\pp|>\varepsilon)=0.$$
\end{theorem}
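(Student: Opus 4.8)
The plan is to pass to the population limit and reduce the entire statement to a deterministic \emph{separation} property of the true response distribution. By the strong law of large numbers, applied coordinate-wise to $\alpha$ (each entry is an average of i.i.d.\ indicators under C3), we have $\alpha \to \alpha_0 := T_{c,g}(Q)\pp$ almost surely. Since $\pp$ is feasible for the inner minimization in \eqref{scorecg}, this at once gives $S_{c,g}(Q) \le |T_{c,g}(Q)\pp - \alpha| = |\alpha_0 - \alpha| \to 0$ a.s. For each fixed binary matrix $Q'$, the value $S_{c,g}(Q')$ is the Euclidean distance from $\alpha$ to the compact convex set $K_{Q'} := \{T_{c,g}(Q')\pp' : \pp' \text{ a distribution}\}$, hence is $1$-Lipschitz in $\alpha$; therefore $S_{c,g}(Q') \to S^\infty(Q') := \inf_{\pp'} |T_{c,g}(Q')\pp' - \alpha_0|$ a.s. Because there are only finitely many binary $m\times k$ matrices, the $\arg\inf$ defining $\hat Q(c,g)$ is a minimum over a finite set and all these convergences hold simultaneously. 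Granting the separation claim that $S^\infty(Q')>0$ for every $Q'\nsim Q$, I set $\delta := \min_{Q'\nsim Q} S^\infty(Q') > 0$; then with probability tending to one, $S_{c,g}(Q) < \delta/2$ while $S_{c,g}(Q') > \delta/2$ for all $Q'\nsim Q$, which forces $\hat Q(c,g) \sim Q$ and yields the first assertion.

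I expect the separation claim to be the main obstacle; it is an identifiability statement, which I would isolate as a proposition and prove in full in the appendix. Because $T(Q)$ is saturated (C2), $\alpha_0$ encodes, via inclusion--exclusion, the entire joint response law $P(\RR=\rr)$, which is a mixture over the $2^k$ profiles $\AA$, with weights $p_{\AA}$, of product-Bernoulli laws whose item-$i$ success probability is $c_i$ or $g_i$ according to $\xi^i_{DINA}(\AA,Q)$. Completeness (C1) supplies items $a_1,\dots,a_k$ with $\xi^{a_j}_{DINA}(\AA,Q) = A^j$; the AND-marginals of these $k$ items are $\bigl(\bigotimes_{j} M_j\bigr)\pp$, where $M_j = \left(\begin{smallmatrix} 1 & 1 \\ g_{a_j} & c_{a_j}\end{smallmatrix}\right)$ has determinant $c_{a_j} - g_{a_j} \neq 0$. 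Invertibility of this Kronecker product, together with $\pp \succ \mathbf 0$ (C4), lets me deconvolve $P(\RR=\rr)$ to recover, for each item $i$ and each profile $\AA$, the conditional probability $P(R^i = 1 \mid \AA) \in \{c_i,g_i\}$, hence the Boolean capability function $\AA \mapsto \xi^i_{DINA}(\AA,Q)$; equivalently, the $2^k$ mixture components are distinguishable and, being positively weighted and linearly independent, the minimal mixture representation is unique. Since distinct supports $\mathrm{supp}(Q_i)$ give distinct monomials $\prod_{j\in\mathrm{supp}(Q_i)} A^j$, these functions determine the rows of $Q$, and the only remaining freedom is the labeling of the attribute coordinates, i.e.\ a permutation of columns. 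Thus any $(Q',\pp')$ reproducing $\alpha_0$ must satisfy $Q' \sim Q$, which is exactly the separation claim.

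For the second assertion, on the event $\hat Q(c,g) \sim Q$ I relabel columns so that $\hat Q(c,g) = Q$ and study $\hat\pp = \arg\inf_{\pp'} |T_{c,g}(Q)\pp' - \alpha|$. The same Kronecker computation shows that the rows of $T_{c,g}(Q)$ indexed by the complete items $a_1,\dots,a_k$ and their $\wedge$-combinations form an invertible $2^k\times 2^k$ block, so $T_{c,g}(Q)$ has full column rank $2^k$; hence the linear map $\pp' \mapsto T_{c,g}(Q)\pp'$ is injective and $\pp$ is its unique preimage of $\alpha_0$. Consequently the minimizer over the compact simplex is the pullback through this injective map of the projection of $\alpha$ onto the compact convex set $K_Q$; both the projection and the pullback are continuous in $\alpha$, and at $\alpha=\alpha_0\in K_Q$ they return $\pp$. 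Since $\alpha\to\alpha_0$ a.s., we get $\hat\pp\to\pp$, so $P(|\hat\pp-\pp|>\varepsilon)\to 0$ for every $\varepsilon>0$; combined with $P(\hat Q(c,g)\sim Q)\to 1$ this completes the proof. Throughout I use the informativeness $c_i\neq g_i$ (guaranteed by $c\succ g$): it is precisely what makes each $M_j$ invertible and what lets each item's conditional response distinguish the two values of its capability indicator.
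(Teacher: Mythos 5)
Your overall architecture for the first assertion (almost-sure convergence $\alpha\to\alpha_0$, the $1$-Lipschitz dependence of the distance-to-$K_{Q'}$ on $\alpha$, finiteness of the set of binary matrices, and a $\delta/2$ separation) matches the paper's, and your route to the separation claim itself is genuinely different: the paper proves it by linear algebra on column spaces (Propositions \ref{PropComp}, \ref{PropIncomp}, \ref{PropSG}: some column of $\tilde T_{c,g}(Q)$ lies outside $C(\tilde T_{c',g'}(Q'))$, combined with $\pp\succ\mathbf 0$), whereas you argue by uniqueness of the finite-mixture representation of the joint response law. However, your version of the crux is asserted rather than proved. Linear independence of the $2^m$ product-Bernoulli components with item parameters in $\{c_i,g_i\}$ (valid by your Kronecker argument, since $c_i\neq g_i$) gives that any $(Q',\pp')$ with $T_{c,g}(Q')\pp'=\alpha_0$ induces a bijection $\sigma$ of $\{0,1\}^k$ with $\xi(\sigma(\AA),Q')=\xi(\AA,Q)$ and $p'_{\sigma(\AA)}=p_{\AA}$. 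But a priori $\sigma$ is an \emph{arbitrary} bijection of the $2^k$ profiles, not a relabeling of the $k$ coordinates, so ``the only remaining freedom is a permutation of columns'' does not follow from ``distinct supports give distinct monomials.'' You must rule out non-coordinate bijections, e.g.\ as follows: for each complete item $a_j$ of $Q$, the function $\xi^{a_j}(\cdot,Q')$ is an AND-monomial over some support $S'_j$ and equals $1$ on exactly $2^{k-|S'_j|}$ profiles ($2^k$ if $S'_j=\emptyset$), while matching $A^j\circ\sigma^{-1}$ forces this count to be $2^{k-1}$; hence $|S'_j|=1$, the map $j\mapsto S'_j$ is injective (otherwise two coordinates of $\AA$ would agree identically), so $\sigma$ is a coordinate permutation $\pi$, and then every row of $Q'$ is the $\pi$-image of the corresponding row of $Q$, i.e.\ $Q'\sim Q$. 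This fillable but essential combinatorial step is exactly where the paper spends its appendix. Note also that the paper's Proposition \ref{PropSG} is stronger — it separates $Q$ from $Q'$ uniformly over \emph{all} $(c',g')\in[0,1]^m$ — which is what Theorem \ref{thmest} needs and what lets the paper deduce Theorem \ref{thmcg} as a one-line corollary; your fixed-$(c,g)$ mixture argument suffices for Theorem \ref{thmcg} but would not extend to Theorem \ref{thmest} without reworking.

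Your second-part argument contains a step that is false as stated: $T_{c,g}(Q)$ does \emph{not} in general have full column rank $2^k$. The block you invoke — rows indexed by the complete items $a_1,\dots,a_k$ and their $\wedge$-combinations — has only $2^k-1$ rows, because the saturated $T$-matrix contains no row for the empty combination; the Kronecker product $\bigotimes_j M_j$ becomes invertible only after appending the all-ones row corresponding to $S=\emptyset$. Concretely, when $m=k$ the whole matrix $T_{c,g}(Q)$ is $(2^k-1)\times 2^k$ and cannot have rank $2^k$, and when $g=\mathbf 0$ the column indexed by the zero attribute profile is identically zero regardless of $m$. The repair is precisely the paper's device: pass to $\tilde T_{c,g}(Q)$ of \eqref{tildeT}, which appends the row $\EE$; this is legitimate because the minimization runs over the simplex, so $\sum_{\AA}p'_{\AA}=1$ supplies the missing coordinate for free. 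Proposition \ref{PropSG} shows $\tilde T_{c,g}(Q)$ has full column rank, which yields injectivity of $\pp'\mapsto T_{c,g}(Q)\pp'$ \emph{restricted to the simplex}, and with that substitution your compactness/continuity argument (continuous bijection from a compact simplex is a homeomorphism onto $K_Q$; projection onto the compact convex $K_Q$ is continuous; at $\alpha=\alpha_0$ the composition returns $\pp$) goes through and recovers $\hat\pp\to\pp$ in probability.
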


\begin{theorem}\label{thmest}
Under the DINA model, suppose that the conditions in Theorem
\ref{thmcg} hold, except that the $c$ and $g$ are unknown. For any $Q'$, $\hat c(Q')$ and $\hat g (Q')$ are estimators for $c$ and $g$. When $Q= Q'$, $(\hat c(Q), \hat g(Q))$ is a consistent estimator of $(c,g)$. Let
$\hat Q_{\hat c,\hat g}$ be as defined in \eqref{estc}. Then
$$\lim_{N\rightarrow \infty}P(\hat Q_{\hat c,\hat g}\sim Q)=1.$$
In addition, with an appropriate arrangement of the column order of $\hat Q_{\hat c, \hat g}$, let
$$\hat {\pp} = \arg\inf_{\pp'} |T_{\hat c(\hat Q_{\hat c, \hat g}),\hat g(\hat Q_{\hat c, \hat g})}(\hat Q_{\hat c,\hat g})\pp' - \alpha|.$$
Then, for any $\varepsilon>0$,
$$\lim_{N\rightarrow \infty}P(|\hat{\pp}-\pp|>\varepsilon)=0.$$
\end{theorem}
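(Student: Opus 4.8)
The plan is to establish $\hat Q_{\hat c, \hat g} \sim Q$ by a bracketing argument that exploits the fact that there are only finitely many candidate binary matrices $Q'$. The first observation I would record is the elementary but decisive inequality that, for \emph{any} choice of the plug-in estimators,
$$S_{\hat c(Q'), \hat g(Q')}(Q') \ \geq\ \inf_{c,g \in [0,1]^m} S_{c,g}(Q') \ =\ \inf_{c,g,\pp} |T_{c,g}(Q')\pp - \alpha| \ =:\ \tilde S(Q').$$
The point is that this lower bound $\tilde S(Q')$ does not involve $\hat c(Q'), \hat g(Q')$ at all, so the possible inconsistency of the estimators \emph{away} from the truth is irrelevant. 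At the true matrix I only need a matching \emph{upper} bound: plugging the true $\pp$ into the infimum defining $S$,
$$S_{\hat c(Q), \hat g(Q)}(Q) \ \leq\ |T_{\hat c(Q), \hat g(Q)}(Q)\pp - \alpha|,$$
and since $\alpha \to E(\alpha) = T_{c,g}(Q)\pp$ almost surely by the law of large numbers, while $T_{\hat c(Q), \hat g(Q)}(Q) \to T_{c,g}(Q)$ in probability by the assumed consistency of $(\hat c(Q), \hat g(Q))$ together with the continuity of $(c,g)\mapsto T_{c,g}(Q)$, the right-hand side tends to $0$ in probability. Hence $S_{\hat c(Q), \hat g(Q)}(Q) \to 0$.

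Next I would pin down the limit of $\tilde S(Q')$. Writing $v = T_{c,g}(Q)\pp$, the map $\alpha \mapsto \tilde S(Q')$ is $1$-Lipschitz, being an infimum of the $1$-Lipschitz maps $\alpha \mapsto |T_{c,g}(Q')\pp - \alpha|$ over the compact parameter region cut out by $c,g\in[0,1]^m$, $p_{\AA}\in[0,1]$, $\sum_{\AA}p_{\AA}=1$. Thus from $\alpha\to v$ I obtain
$$\tilde S(Q') \ \longrightarrow\ \delta_{Q'} := \inf_{c,g,\pp} |T_{c,g}(Q')\pp - v|$$
almost surely, the infimum being attained by compactness. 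Since there are finitely many binary $Q'$, I set $\delta = \min_{Q'\nsim Q}\delta_{Q'}$, and the whole consistency claim reduces to proving $\delta > 0$: once that holds, with probability tending to one $S_{\hat c(Q), \hat g(Q)}(Q) < \delta/2 < \tilde S(Q') \leq S_{\hat c(Q'), \hat g(Q')}(Q')$ simultaneously for all $Q'\nsim Q$, which forces the minimiser $\hat Q_{\hat c, \hat g}$ to be equivalent to $Q$.

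The main obstacle is therefore the \emph{joint identifiability} statement $\delta_{Q'} > 0$ for every $Q'\nsim Q$, i.e.\ that the true response vector $v = T_{c,g}(Q)\pp$ cannot be reproduced by \emph{any} admissible triple $(c',g',\pp')$ built on a non-equivalent matrix. This is strictly stronger than the identifiability behind Theorem \ref{thmcg}, where $(c,g)$ were frozen at their true values; here a competing $Q'$ is allowed to compensate through its own slipping and guessing parameters. I would prove it as a separate proposition using the structural hypotheses: completeness (C1) supplies items whose capability vectors $B_Q(I_i)$ are the unit indicators, saturation (C2) makes every ``and'' combination available so that the full joint law is encoded in $v$, and diversification (C4), $\pp\succ\mathbf 0$, ensures no attribute profile is invisible. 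The combinatorial core is to show that matching $v$ forces the $0$--$1$ capability patterns of $Q'$ to coincide with those of $Q$ up to a permutation of columns, whence $Q'\sim Q$; this is the step I expect to be most delicate, and it is presumably the content of the central propositions of Section \ref{SecProof}.

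Finally, for the attribute distribution I would work on the event $\{\hat Q_{\hat c, \hat g}\sim Q\}$, whose probability tends to one, and write $\hat Q = \hat Q_{\hat c, \hat g}$ for brevity. After arranging columns so that $\hat Q$ aligns with $Q$, the item parameters are unaffected (they are indexed by rows), so the consistency of $(\hat c(Q),\hat g(Q))$ transfers and gives $T_{\hat c(\hat Q),\hat g(\hat Q)}(\hat Q) \to T_{c,g}(Q)$ in probability, while $\alpha\to v = T_{c,g}(Q)\pp$. Under C1--C2 the matrix $T_{c,g}(Q)$ has full column rank $2^k$, so the limiting system $T_{c,g}(Q)\pp' = v$ admits the unique admissible solution $\pp'=\pp$; the minimiser defining $\hat\pp$ is then a continuous function of the data near this limit, and a standard argmin-continuity argument yields $\hat\pp\to\pp$ in probability, establishing the second assertion.
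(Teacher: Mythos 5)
Your skeleton coincides with the paper's own proof of Theorem \ref{thmest}: an upper bound $S_{\hat c(Q),\hat g(Q)}(Q)\rightarrow 0$ from the law of large numbers together with consistency of $(\hat c(Q),\hat g(Q))$ and continuity of $(c,g)\mapsto T_{c,g}(Q)$; a lower bound for each $Q'\nsim Q$ that eliminates the (possibly inconsistent) plug-in estimators by passing to $\inf_{c',g'\in[0,1]^m}S_{c',g'}(Q')$; uniform positivity of that bound over the compact parameter region; and finiteness of the set of $m\times k$ binary matrices, followed by an argmin-continuity argument for $\hat\pp$. Your Lipschitz-plus-compactness route to the uniform $\delta>0$ is an equivalent packaging of the paper's argument via continuity of $\delta(c',g')$ and ``elementary calculus.'' The one substantive piece you leave unproved --- that no admissible triple $(c',g',\pp')$ built on $Q'\nsim Q$ can reproduce $v=T_{c,g}(Q)\pp$ --- is exactly Proposition \ref{PropSG}, which the paper also treats as an external ingredient in the theorem's proof; it is obtained there by appending the row $\EE$ (the matrices $\tilde T$ of \eqref{tildeT}), constructing a row transformation $D_{g^*}$ with $D_{g'}\tilde T_{c,g}(Q)=T_{c-g',g-g'}(Q)$ and $D_{g'}\tilde T_{c',g'}(Q')=T_{c'-g'}(Q')$, and reducing to the zero-guessing comparisons of Propositions \ref{PropComp} and \ref{PropIncomp} through Lemma \ref{LemColT}. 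So deferring it is structurally faithful to the paper, though you should note that this step also requires $c\ncong g$, a hypothesis carried by the propositions but not listed among C1--C4.

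There is, however, one concrete error in your final step: the claim that $T_{c,g}(Q)$ has full column rank $2^k$ under C1--C2 is false in general. When $g=\mathbf 0$, the column indexed by the all-zero attribute profile is the zero vector (a subject with no attributes and no guessing answers nothing positively), so the rank is at most $2^k-1$; this is precisely why Proposition \ref{PropRank} speaks of removing a zero column, and why the paper proves full column rank only for the \emph{augmented} matrix $\tilde T_{c,g}(Q)$. The repair is immediate and is the one the paper uses: since the minimization defining $\hat\pp$ is subject to $\sum_{\AA}p_{\AA}=1$, the constraint is equivalent to the appended row of ones, so you should run your argmin-continuity argument with $\tilde T_{\hat c(\hat Q),\hat g(\hat Q)}(\hat Q)$ and the augmented target $(\alpha^\top,1)^\top$, where injectivity on the simplex does hold by Proposition \ref{PropSG}; with that substitution your second assertion goes through verbatim.
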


In what follows, we present the consistency results for the DINO model.

\begin{theorem}\label{thmdinocg}
Under the DINO model, suppose that conditions C1-4 hold, that is,
$Q$ is complete,  $U(Q)$ is saturated, the attribute profiles are
i.i.d., and   $\pp$ is diversifies. Suppose also that the $c$
and $g$ are known. Let $V_{c,g}(Q')$ be defined as in
\eqref{ScoreDINO} and
$$\tilde Q(c,g) = \arg\inf_{Q'}V_{c,g}(Q').$$
Then,
$$\lim_{N\rightarrow \infty}P(\tilde Q(c,g)\sim Q)=1.$$
In addition, with an appropriate arrangement of the column order of $\tilde Q(c,g)$, let
$$\hat {\pp} = \arg\inf_{\pp'} |U_{c,g}(\tilde Q(c,g))\pp' - \beta|.$$
Then, for any $\varepsilon>0$,
$$\lim_{N\rightarrow \infty}P(|\hat{\pp}-\pp|>\varepsilon)=0.$$
\end{theorem}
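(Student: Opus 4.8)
The plan is to prove Theorem \ref{thmdinocg} by reducing it to Theorem \ref{thmcg} through the duality between the DINO and DINA models, so that essentially no new large-sample analysis is needed. The central observation is that complementing both the responses and the attribute profiles turns a DINO model into a DINA model. Writing $\AA^c = (1-A^1,\dots,1-A^k)^\top$ for the complemented profile, one checks directly from \eqref{xidina} and \eqref{xidino} the pointwise identity $\xi^i_{DINO}(\AA,Q) = 1 - \xi^i_{DINA}(\AA^c,Q)$: a subject fails all required attributes of item $i$ under the ``OR'' rule exactly when its complement possesses all of them under the ``AND'' rule. Substituting into \eqref{probdino} gives $P(1-R^i = 1 \mid \AA) = (1-g_i)^{\xi^i_{DINA}(\AA^c,Q)}(1-c_i)^{1-\xi^i_{DINA}(\AA^c,Q)}$, which is precisely the DINA law \eqref{prob} for the complemented responses $1-R^i_r$, the complemented profiles $\AA^c_r$, the same matrix $Q$, correct-response probabilities $\tilde c = (1-g_i)_i$, and guessing probabilities $\tilde g = (1-c_i)_i$. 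Hence the complemented data follow a DINA model with attribute distribution $\pp^c$, where $p^c_{\AA} = p_{\AA^c}$ (a coordinate permutation of $\pp$ induced by the involution $\AA \mapsto \AA^c$).

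Next I would make the objective-function correspondence precise, which I expect to be the main technical step. Combining the defining relation $F_Q(I_{i_1}\vee\cdots\vee I_{i_l}) = \EE - \Upsilon_{j}(\EE - F_Q(I_{i_j}))$ with the pointwise duality above, and the matching De Morgan identity at the level of counts (``at least one positive response to $i_1,\dots,i_l$'' is the complement of ``all responses negative,'' and a negative original response is a positive complemented response), I would show that for every generic $Q'$ the $U$-system and the complemented $T$-system differ only by a common additive constant. Concretely, under the natural row bijection identifying the OR-combination $I_{i_1}\vee\cdots\vee I_{i_l}$ with the AND-combination $I_{i_1}\wedge\cdots\wedge I_{i_l}$, one gets $U_{c,g}(Q')\pp = \one - T_{\tilde c,\tilde g}(Q')\,\pp^c$ and likewise $\beta = \one - \alpha^c$, where $\one$ is the all-ones vector of the appropriate length and $\alpha^c$ is the DINA $\alpha$-vector built from the complemented sample. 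The constant $\one$ cancels inside the Euclidean norm, so $|U_{c,g}(Q')\pp - \beta| = |T_{\tilde c,\tilde g}(Q')\pp^c - \alpha^c|$. Since $\pp \mapsto \pp^c$ is a bijection of the probability simplex onto itself, the constrained infima coincide, yielding $V_{c,g}(Q') = S_{\tilde c,\tilde g}(Q')$ on the complemented data for every $Q'$. Consequently $\tilde Q(c,g) = \arg\inf_{Q'} S_{\tilde c,\tilde g}(Q')$ is exactly the DINA estimator of Theorem \ref{thmcg} applied to the complemented sample, and the minimizing $\hat\pp$ for the DINO problem is the image under $\AA\mapsto\AA^c$ of the corresponding DINA minimizer.

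It then remains to verify that the hypotheses of Theorem \ref{thmcg} hold for the complemented DINA problem, which is routine since every relevant structure is invariant under complementation: $Q$ is unchanged, so completeness (C1) is preserved; $U(Q)$ saturated is equivalent to $T(Q)$ saturated, as both include all $l$-fold combinations over the same index sets (C2); the profiles $\AA^c_r$ are i.i.d.\ whenever the $\AA_r$ are (C3); and $\pp^c \succ \mathbf 0$ if and only if $\pp \succ \mathbf 0$, so diversification (C4) carries over. Moreover $\tilde c,\tilde g$ are known because $c,g$ are, and any parameter separation used for the DINA model, such as $c\succ g$, transfers to $\tilde c\succ\tilde g$. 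Applying Theorem \ref{thmcg} to the complemented data then gives $P(\tilde Q(c,g)\sim Q)\to 1$ and, after the appropriate column arrangement, $\hat\pp^c \to \pp^c$ in probability; undoing the complementation yields $\hat\pp\to\pp$, which is the claim. The only step demanding genuine care is the objective-function identity of the second paragraph: one must track the reindexing $\AA\mapsto\AA^c$ and confirm that the additive $\one$ terms arising from the inclusion-exclusion structure cancel exactly, so that $V_{c,g}$ and $S_{\tilde c,\tilde g}$ are genuinely equal rather than merely monotonically related.
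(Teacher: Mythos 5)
Your argument is correct, but it reaches Theorem \ref{thmdinocg} by a genuinely different route than the paper. The paper treats this theorem as a one-line corollary of the unknown-parameter Theorem \ref{thmdinoest}, whose proof is declared ``completely analogous'' to that of Theorem \ref{thmest}: the law-of-large-numbers step and the uniform lower-bound ($\delta$) step are rerun for the DINO objective $V_{c,g}$, with the required column-space separation supplied by Proposition \ref{PropDINO}. In the paper the duality appears only \emph{locally}, inside the proof of Proposition \ref{PropDINO}, via Lemma \ref{PropDual} and the identity $\mathbb{E} - U_{c,g}(Q) = T_{1-g,1-c}(Q)$ (after the column reindexing $\AA \mapsto \AA^c$), which is then pushed through the row-transformation matrices $D'_{c'}$ to reduce to Propositions \ref{PropComp} and \ref{PropIncomp}. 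You instead apply the same De Morgan duality \emph{globally, at the level of the data}: complementing responses and attribute profiles turns the DINO sample into an exact DINA sample with the same $Q$, parameters $\tilde c = 1-g$, $\tilde g = 1-c$, and permuted attribute distribution $\pp^c$; your identities $\beta = \one - \alpha^c$ and $U_{c,g}(Q')\pp = \one - T_{\tilde c,\tilde g}(Q')\pp^c$ (for every generic $Q'$, under the row bijection between $\vee$- and $\wedge$-combinations) give $V_{c,g}(Q') = S_{\tilde c,\tilde g}(Q')$ on the complemented data, so Theorem \ref{thmcg} applies as a black box. Your transfer checks are the right ones: C1--C4 are complementation-invariant, and the implicit separation condition carries over since $c \ncong g \Leftrightarrow \tilde c \ncong \tilde g$ and $c \succ g \Leftrightarrow \tilde c \succ \tilde g$; the reindexing bookkeeping you single out as the delicate step is exactly the computation the paper performs inside Proposition \ref{PropDINO}. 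What your route buys is economy: no DINO-specific separation proposition and no repetition of the asymptotic argument are needed for the known-$(c,g)$ case. What the paper's route buys is uniformity: Proposition \ref{PropDINO} is stated for all $(c',g') \in [0,1]^m$ and so directly powers Theorem \ref{thmdinoest}; your reduction extends there too, but only after observing that the plug-in estimators in \eqref{cgdino} commute with complementation, i.e., minimizing $V_{c,g}(Q)$ over $(c,g)$ is minimizing the complemented-data DINA objective over $(\tilde c,\tilde g)$.
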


\begin{theorem}\label{thmdinoest}
Under the DINO model, suppose that the conditions in Theorem
\ref{thmdinocg} hold, except that the $c$ and $g$ are unknown.
For any $Q'$, $\hat c(Q')$ and $\hat g (Q')$ are estimators for $c$ and $g$. When $Q= Q'$, $\hat c(Q)$ and $\hat g(Q)$ are  consistent estimators of $c$
and $g$. Let $\hat Q_{\hat c,\hat g}$ be defined as in
\eqref{estdinopara}. Then
$$\lim_{N\rightarrow \infty}P(\tilde Q_{\hat c ,\hat g}\sim Q)=1.$$
In addition, with an appropriate arrangement of the column order of $\tilde Q_{\hat c,\hat g}$, let
$$\hat {\pp} = \arg\inf_{\pp'} |U_{\hat c(\tilde Q_{\hat c,\hat g} ),\hat g(\tilde Q_{\hat c,\hat g} )}(\tilde Q_{\hat c, \hat g})\pp' - \beta|.$$
Then, for any $\varepsilon>0$,
$$\lim_{N\rightarrow \infty}P(|\hat{\pp}-\pp|>\varepsilon)=0.$$
\end{theorem}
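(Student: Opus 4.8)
The plan is to reduce the statement to Theorem \ref{thmest}, the DINA result with unknown $c,g$, by exploiting the duality between the two models. The organizing identity is $\xi^i_{DINO}(\AA,Q)=1-\xi^i_{DINA}(\one-\AA,Q)$: complementing the attribute profile turns an ``OR'' capability into the complement of an ``AND'' capability. I would use this to set up a correspondence between the DINO problem on the observed data $\RR_1,\dots,\RR_N$ with parameters $(Q,c,g,\pp)$ and a DINA problem on the complemented data $\one-\RR_1,\dots,\one-\RR_N$ with parameters $(Q,c^{*},g^{*},\pp^{*})$, where $c^{*}=\one-g$, $g^{*}=\one-c$, and $\pp^{*}$ is the distribution of $\one-\AA_r$ (equivalently $p^{*}_{\AA}=p_{\one-\AA}$).

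The first substantive step is to verify that this correspondence is exact at the matrix level, not merely asymptotically. A direct computation gives that the $\AA$-th entry of $F_{c,g,Q}(I_i)$ equals $1$ minus the $(\one-\AA)$-th entry of $B_{c^{*},g^{*},Q}(I_i)$; because the ``OR'' recursion $F_{c,g,Q}(I_{i_1}\vee\cdots\vee I_{i_l})=\EE-\Upsilon_{j}(\EE-F_{c,g,Q}(I_{i_j}))$ is exactly the complement of the ``AND'' recursion \eqref{ProdBcg}, this relation propagates to every row, so each row of $U_{c,g}(Q')$ equals $\EE$ minus the column-reindexed (by $\AA\mapsto\one-\AA$) corresponding row of $T_{c^{*},g^{*}}(Q')$. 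On the data side, a positive event $I_{i_1}\vee\cdots\vee I_{i_l}$ is the complement of the ``AND'' event $\bar I_{i_1}\wedge\cdots\wedge\bar I_{i_l}$ for the complemented responses, so $\beta=\EE^\top-\alpha^{*}$, where $\alpha^{*}$ is the $\alpha$-vector built from $\one-\RR_r$ and $\EE^\top$ is the all-ones column of matching length. Using the simplex constraint $\sum_\AA p_\AA=1$ one then obtains $U_{c,g}(Q')\pp=\EE^\top-T_{c^{*},g^{*}}(Q')\pp^{*}$, hence $U_{c,g}(Q')\pp-\beta=-(T_{c^{*},g^{*}}(Q')\pp^{*}-\alpha^{*})$. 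Since $\pp\mapsto\pp^{*}$ maps the constraint set $\{\pp\ge\mathbf 0,\ \sum p_\AA=1\}$ onto itself, taking infima yields $V_{c,g}(Q')=S_{c^{*},g^{*}}(Q')$ for every $Q'$ and every $(c,g)$.

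The second step is to transfer the hypotheses and the plug-in estimators. Completeness of $Q$ (C1) is unchanged; $U(Q)$ is saturated iff $T(Q)$ is (C2), since the row correspondence is over the same nonempty subsets of items; the $\one-\AA_r$ remain i.i.d.\ (C3); and $\pp\succ\mathbf 0$ iff $\pp^{*}\succ\mathbf 0$, as complementation only permutes the components of $\pp$ (C4). The DINO estimators $\hat c(Q'),\hat g(Q')$ induce dual estimators $\hat c^{*}(Q')=\one-\hat g(Q')$, $\hat g^{*}(Q')=\one-\hat c(Q')$, and the assumed consistency of $(\hat c(Q),\hat g(Q))$ at the true matrix gives, by continuity, consistency of $(\hat c^{*}(Q),\hat g^{*}(Q))$ for $(c^{*},g^{*})$. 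Because $V_{\hat c(Q'),\hat g(Q')}(Q')=S_{\hat c^{*}(Q'),\hat g^{*}(Q')}(Q')$ for all $Q'$, the DINO estimator $\tilde Q_{\hat c,\hat g}$ of \eqref{estdinopara} is the dual DINA estimator $\hat Q_{\hat c^{*},\hat g^{*}}$ of \eqref{estc} applied to the complemented sample. Theorem \ref{thmest} then gives $P(\tilde Q_{\hat c,\hat g}\sim Q)\to1$, and, after the matching column permutation, $\hat\pp^{*}\to\pp^{*}$ in probability; applying the involutive complementation map back delivers $\hat\pp\to\pp$, the second assertion.

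The main obstacle is the exactness of the reduction in the second paragraph: everything rests on the matrix identity relating $U_{c,g}(Q')$ to a reindexed $T_{c^{*},g^{*}}(Q')$ holding for \emph{all} $Q'$ (including misspecified ones), on the affine relation $\beta=\EE^\top-\alpha^{*}$, and on the simplex constraint forcing the two losses to coincide term by term rather than only in the limit; one must also track the reindexing $\AA\mapsto\one-\AA$ so that the column permutation realizing $\tilde Q_{\hat c,\hat g}\sim Q$ is the very one realizing $\hat Q_{\hat c^{*},\hat g^{*}}\sim Q$. If a self-contained argument were preferred, the corresponding hard point would be the uniform-in-$(c,g)$ separation $\liminf_N\inf_{Q'\nsim Q}\inf_{c,g\in[0,1]^m}V_{c,g}(Q')>0$, i.e.\ the DINO analogue of the central identifiability proposition: no incorrect $Q'$, optimized freely over $(c,g,\pp)$, can reproduce the limiting response distribution $U_{c,g}(Q)\pp$. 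Either route places the deep content on the DINA side, consistent with the duality program announced in the text.
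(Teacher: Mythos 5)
Your proof is correct, but it is organized differently from the paper's. The paper proves Theorem~\ref{thmdinoest} by rerunning the argument of Theorem~\ref{thmest} verbatim with $(U,V,\beta,\tilde U)$ in place of $(T,S,\alpha,\tilde T)$ (it states the proof is ``completely analogous'' and omits it); the only place duality enters is inside the proof of Proposition~\ref{PropDINO}, where the population-level identity $\mathbb{E}-U_{c,g}(Q)=T_{1-g,1-c}(Q)$ (with the column reindexing $\AA\mapsto\one-\AA$ of Lemma~\ref{PropDual}) reduces the DINO separation statement to the DINA Propositions~\ref{PropComp} and~\ref{PropIncomp}, after which the law-of-large-numbers, uniform-separation, and full-rank steps are repeated on the DINO side. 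You instead push the same duality all the way to the sample level: complementing the responses $\RR_r\mapsto\one-\RR_r$ and attributes, and setting $c^{*}=\one-g$, $g^{*}=\one-c$, $p^{*}_{\AA}=p_{\one-\AA}$, you verify (correctly --- your entrywise computation $1-B_{c^{*},g^{*},Q}(I_i)$-entry $=F_{c,g,Q}(I_i)$-entry matches the paper's $\EE-F_{c,g,Q}(I_i)=B_{1-g,1-c,Q}(I_i)$) that the complemented data are exactly DINA-distributed with the same true $Q$, that $\beta=\EE^\top-\alpha^{*}$, and that on the simplex $V_{c,g}(Q')=S_{c^{*},g^{*}}(Q')$ for every $Q'$, so \eqref{estdinopara} literally coincides with \eqref{estc} for the dual problem and Theorem~\ref{thmest} applies as a black box; your use of the simplex constraint to absorb the all-ones shift even makes the appended row $\tilde U$ unnecessary. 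What each route buys: the paper's parallel-argument route produces DINO-specific intermediate facts (the column-space separation and full column rank of $\tilde U_{c,g}(Q)$ in Proposition~\ref{PropDINO}) that it reuses elsewhere, e.g.\ in the remark that consistency of $\tilde Q_{\hat c,\hat g}$ survives inconsistent $(\hat c,\hat g)$ of the form \eqref{cgdino}; your reduction is shorter and makes the duality exact including the stochastic part, automatically transferring the $\hat\pp$-consistency through the involution $\pp\mapsto\pp^{*}$, and you are right to flag the only delicate points --- validity of the matrix identity for all (misspecified) $Q'$, constraint-set invariance, and tracking the column permutation, each of which you handle. One small housekeeping item: the theorems never state $c\ncong g$ explicitly though the underlying propositions require it; you should note that this implicit hypothesis is self-dual, since $c^{*}\ncong g^{*}$ if and only if $c\ncong g$, so nothing is lost in the transfer.
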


\begin{remark}
It is not hard to verify that ``$\sim$'' defines a binary
equivalence relation on the space of $m\times k$ binary matrices,
denoted by $\mathcal M_{m,k}$. As previously mentioned, the data
do not contain information about the specific meaning of the
attributes. Therefore, we do not expect to distinguish $Q_{1}$
from $Q_{2}$ if $Q_{1}\sim Q_{2}$.   Therefore, the
identifiability in the theorems is the strongest type that one may
expect. The corresponding quotient set is the finest resolution
that is possibly identifiable based on the data. Under weaker
conditions, such as in absence of completeness of the $Q$-matrix
or the complete diversity of the attribute distribution, the
identifiability of the $Q$-matrix may be weaker, which corresponds
to a coarser quotient set.
\end{remark}

\begin{remark}
We would like to point out that, when the estimators in
\eqref{cgdina} and \eqref{cgdino} are chosen, $\hat Q_{\hat c,
\hat g}$ is always a consistent estimator of $Q$, even if $(\hat
c, \hat g)$ is not a consistent estimator for $(c,g)$. This is
because the proof of Theorem \ref{thmest} is based on the fact
that $S_{\hat c(Q),\hat g(Q)}(Q)\rightarrow 0$ in probability;
when $Q'\nsim Q$, $S_{\hat c(Q'),\hat g(Q')}(Q')$ is bounded below
by some $\delta
>0$. Given that $S_{c,g} (Q)\rightarrow 0$ and that $(\hat c, \hat
g)$ is chosen to minimize the objective function $S$ , $S_{\hat
c(Q),\hat g(Q)}(Q)$ decreases to zero regardless whether or not
$(\hat c, \hat g)$ is consistent. In addition,  the fact that
$S_{\hat c(Q'),\hat g(Q')}(Q')$ is bounded below by some $\delta
>0$ does not require any consistency property of $(\hat c, \hat
g)$. Therefore, the consistency of $\hat Q_{\hat c, \hat g}$ does
not rely on the consistency of $(\hat c, \hat g)$ if it is of the
particular forms as in \eqref{cgdina} and \eqref{cgdino}. On the
other hand, in order to have $\hat {\pp}$ being consistent, it is
necessary to require the consistency for $(\hat c, \hat g)$.
Therefore, in the statement of Theorem \ref{thmest} we require the
consistency of $(\hat c, \hat g)$, though it is necessary to point
out this subtlety. A similar argument applies to Theorem
\ref{thmdinoest} as well.
\end{remark}

%\section{Non-conjunctive models}

%In this section, we consider the estimation problems of a few other DCM's. Central to the estimation procedure is the construction of the $T$-matrix.

\section{Discussions and implementation}\label{SecDisc}

This paper focuses mostly on the estimation  of the $Q$-matrix. In
this section, we discuss several practical issues and a few other
usages of the proposed tools.

\paragraph{Computational issues.}
There are several aspects we would like to address. First, for a
given $Q$, the evaluation of $S_{c,g}(Q)$ only consists of
optimization of a quadratic function subject to linear
constraint(s). This can be done by quadratic programming type of
well established algorithms.

Second, the theories require construction of a saturated
$T$-matrix or $U$-matrix which is $2^m-1$ by $2^k$. Note that when
$m$ is reasonably large, for instance, $m=20$, a saturated
$T$-matrix has over 1 million rows. One solution is to include
part of the combinations and gradually include more combinations
if the criterion function admit small values at multiple
$Q$-matrices. Alternatively, we may split the items into multiple
groups which we will elaborate in the next paragraph.

The third computational issue is related to minimization of
$S_{c,g}(Q)$ with respect to $Q$. This involves evaluating
function $S$ over all the $m\times k$ binary matrices, which has a
cardinality of $2^{m\times k}$. Simply searching through such a
space is a substantial computation overhead. In practice, one may
want to handle such a situation by splitting the $Q$-matrix in the
following manner. Suppose there are $m$ items. We split them into
$l$ groups, each of which has $m_0$ (a computationally manageable
number) items.  This is equivalent to dividing a large $Q$-matrix
into multiple smaller sub-matrices. When necessary, we may allow
different groups to have overlaps of items. Then, we can estimate
each sub-matrix separately and merge them into an estimate of the
big $Q$-matrix. Given that the asymptotic results are applicable
to each of the sub-matrices, the combined estimate is also
consistent. This is similar to the splitting procedure in Chapter
8.6 of \cite{Tatsuoka}. We emphasize that splitting the parameter
space is typically not valid for usual statistical inferences.
However, the $Q$-matrix admits a special structure with which the
splitting is feasible and valid. This partially helps to relieve
the computation burden related to the proposed procedure. On the
other hand, it is always desirable to have a generic efficient
algorithm for a general large scale $Q$-matrix. We leave this as a
topic for a future investigation.

\paragraph{Partially specified $Q$-matrix.} It is often reasonable
to assume that some entries of the $Q$-matrix are known. For
example, suppose we can separate the attributes into ``hard'' and
``soft'' ones. By ``hard'', we mean those that are concrete and
easily recognizable in a given problem and, by ``soft'', we mean
those that are subtle and not obvious. We can then assume that the
entry columns which correspond to the ``hard'' attributes are
known. Another instance is that there is a subset of items whose attribute requirements are known and the item-attribute relationships of the other items need to be learnt, such as the scenarios when new items need to be calibrated according to the existing ones.
In this sense, even if an estimated $Q$-matrix may not be sufficient to replace the a priori $Q$-matrix provided by the ``expert'' (such as exam makers), it can serve as a validation as well as a source of calibration of the existing knowledge of the $Q$-matrix.

When such information is available and correct, the computation
can be substantially reduced. This is because the optimization,
for instance that in \eqref{estc}, can be performed subject to the
existing knowledge of the $Q$-matrix. In particular, once a set of
items is known to form a complete $Q$-matrix, that is, item $i$ is
known to only require attribute $i$ for $i=1,...,k$, then one can
calibrate one item at a time. More specifically, at each time, one
can estimate the sub-matrix consisting of items $1$ to $k$ as well
as one additional item, the computational cost of which is
$O(2^k)$. Then the overall computational cost is reduced to $O(m
2^k)$, which is typically of a manageable order.

\paragraph{Validation of a $Q$-matrix.}
The propose framework is applicable to not only the estimation of
the $Q$-matrix but also validation of an existing $Q$-matrix.
Consider the DINA and DINO models. If the $Q$-matrix is correctly
specified, then one may expect
$$|\alpha - T_{\hat c, \hat g}(Q) \pp| \rightarrow 0$$
in probability as $N\rightarrow \infty$. The above convergence
requires no additional conditions (such as completeness or
diversified attribute distribution). In fact, it suffices to have
that the responses are conditionally independent given the
attributes and $(\hat c, \hat g)$ are consistent estimators of
$(c,g)$. Then, one may expect that
$$S_{\hat c, \hat g}(Q) \rightarrow 0.$$
If the convergence rate of the estimators $(\hat c,\hat g)$ is
known, for instance, $(\hat c -c, \hat g- g)= O_{p}(n^{-1/2})$,
then a necessary condition for a correctly specified $Q$-matrix is
that $S_{\hat c, \hat g}(Q)= O_{p}(n^{-1/2})$. The asymptotic
distribution of $S$ depends on the specific form of $(\hat c,
\hat g)$. Consequently, checking the closeness of $S$ to zero
forms a procedure for validation of the existing knowledge of the
$Q$-matrix.

\section{Proofs of the theorems}\label{SecProof}

\subsection{Preliminary results: propositions and lemmas}\label{SecLem}

\begin{proposition}\label{PropRank}
Under the setting of the DINA model, suppose that $Q$ is complete and matrix $T(Q)$ is saturated. Then,
we are able to arrange the columns and rows of $Q$ and $T(Q)$ such
that  $T(Q)_{1:(2^k-1)}$ has rank $2^k-1$, that is, after removing
one zero column this sub-matrix has full column rank.
\end{proposition}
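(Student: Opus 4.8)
The plan is to exhibit an explicit set of $2^k-1$ rows of the saturated $T(Q)$ that, after reordering, form the first $2^k-1$ rows and have the desired rank. Since $Q$ is complete, $\{e_1,\dots,e_k\}\subset R_Q$, so I first reorder the items so that item $j$ has $Q$-row equal to $e_j$ for $j=1,\dots,k$. For such an item the defining indicator gives $B_Q(I_j)(\AA)=\xi^j_{DINA}(\AA,Q)=\mathbf 1(A^j\ge 1)=A^j$, since the only constraint imposed by the row $e_j$ is $A^j=1$.

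Next, for each nonempty subset $S\subseteq\{1,\dots,k\}$, I consider the ``and'' combination of the items in $S$. Because $T(Q)$ is saturated and $1\le|S|\le k\le m$, this combination is one of the rows of $T(Q)$. By the product formula \eqref{ProdB} and the computation above, its $\AA$-th entry is $\prod_{j\in S}A^j$, which equals $1$ exactly when $A^j=1$ for all $j\in S$. Writing $T_{\AA}=\{j:A^j=1\}$ for the support of a profile $\AA$, this entry is $\mathbf 1(S\subseteq T_{\AA})$. There are exactly $2^k-1$ nonempty subsets $S$, so I arrange the corresponding rows as the first $2^k-1$ rows of $T(Q)$ and call the resulting $(2^k-1)\times 2^k$ block $M'$.

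I then identify the zero column and reduce to a square matrix. For every nonempty $S$ the entry in the column $\AA=\mathbf 0$ is $\prod_{j\in S}0=0$, so that column of $M'$ vanishes; conversely, for any $\AA\ne\mathbf 0$ the row indexed by $S=T_{\AA}$ contributes a $1$ in that column, so $\AA=\mathbf 0$ is the unique zero column. Deleting it leaves a $(2^k-1)\times(2^k-1)$ matrix $M$ whose $(S,\AA)$ entry is $\mathbf 1(S\subseteq T_{\AA})$, indexed by nonempty subsets $S$ (rows) and by nonzero profiles, equivalently nonempty supports $T_{\AA}$ (columns). This is the zeta matrix of the Boolean lattice with its bottom element removed.

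To finish I show $M$ is nonsingular. Ordering both rows and columns by increasing cardinality of the indexing set (ties broken arbitrarily but consistently), the diagonal entries are $\mathbf 1(S\subseteq S)=1$, while an entry $\mathbf 1(S\subseteq T_{\AA})$ can be nonzero only when $|S|\le|T_{\AA}|$, and is nonzero with $|S|=|T_{\AA}|$ only when $S=T_{\AA}$. Hence $M$ is block upper triangular with identity diagonal blocks, so $\det M=1$ and $M$ has full rank $2^k-1$. Consequently $T(Q)_{1:(2^k-1)}=M'$ has rank $2^k-1$, and after removing its single zero column it has full column rank, as claimed. The only genuinely substantive step is the nonsingularity of $M$; everything else is bookkeeping resting on completeness, saturation, and the product structure of $T(Q)$, and the triangularity argument disposes of it cleanly.
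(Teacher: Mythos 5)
Your proposal is correct and is essentially the paper's own argument: the paper likewise uses completeness to take items $1,\dots,k$ with $Q$-rows $e_1,\dots,e_k$, forms all ``and'' combinations of these $k$ items as the first $2^k-1$ rows, orders rows and columns by cardinality, and observes a block upper triangular structure with identity diagonal blocks (display \eqref{A}), concluding full rank after deleting the zero column. Your zeta-matrix-of-the-Boolean-lattice phrasing is just a clean repackaging of that same triangularity argument, not a different route.
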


\begin{proof}[Proof of Proposition \ref{PropRank}]
We let the first column of $T(Q)$ correspond to the zero attribute
profile. Then, the first column is a zero vector, which is the
column we mean to remove in the statement of the proposition.
Provided that $Q$ is complete, without loss of generality we
assume that the $i$-th row vector of $Q$ is $e_i^\top$ for
$i=1,...,k$, that is, item $i$ only requires attribute $i$ for
each $i=1,...,k$. The first $2^k -1$ rows of $T(Q)$ are associated
with $\{I_1,...,I_k\}$. In particular, we let the first $k$ rows
correspond to $I_1,...,I_k$ and the second to the $(k+1)$-th
columns of $T(Q)$ correspond to $\mathbf A$'s that only have one
attribute. We
further arrange the next $C^k_2$ %\footnote[2]{$C^n_m =\frac{n!}{m!(n-m)!}$.}
rows of $T(Q)$ to correspond to combinations of two items, $I_i
\wedge I_j$, $i\neq j$. The next $C^k_2$ columns of $T(Q)$
correspond to $\mathbf A$'s that only have two positive attributes.
Similarly, we arrange $T(Q)$ for combinations of three, four, and up
to $k$ items. Therefore, the first $2^{k}-1$ rows of $T(Q)$ admit a
block upper triangle form. In addition, we are able to further
arrange the columns within each block such that the diagonal
matrices are identities, so that $T(Q)$ has form
\begin{equation}\label{A}
\begin{array}{c}
I_{1},I_{2},... \\
I_{1}\wedge I_{2},I_{1}\wedge I_{3},... \\
I_{1}\wedge I_{2}\wedge I_{3},... \\
\vdots%
\end{array}%
\left(
\begin{array}{ccccccccccc}
0&\mathcal I_{k} & \ast  & \ast  & \ast  & \ldots &  \\
0&0 & \mathcal I_{C_{2}^{k}} & \ast  & \ast  &  &    \\
0&0 & 0 & \mathcal I_{C_{3}^{k}} & \ast  &  &    \\
\vdots&\vdots & \vdots & \vdots &  &  &
\end{array}%
\right).
\end{equation}
$T(Q)_{1:(2^k-1)}$ obviously has full rank after removing the zero (first) column.
\end{proof}

From now on, we assume that $Q_{1:k}=\mathcal I_k$ and the first
$2^{k}-1$ rows of $T(Q)$ are arranged in the order as in \eqref{A}.

\begin{proposition}\label{PropComp}
Under the DINA model, that is, the ability indicator follows
\eqref{xidina}, assume that $Q$ is a complete matrix and $T(Q)$ is
saturated. Without loss of generality, let $Q_{1:k} =\mathcal
I_k$. Assume that the first $k$ rows of $Q'$ form a complete
matrix. Further, assume that $Q_{1:k}=Q'_{1:k}=\mathcal I_k$. If
$Q'\neq Q$ and $c\ncong g$, then for all $c'\in \mathbb R^m$ there
exists at least one column vector of $T_{c,g}(Q)$ not in the
column space $C(T_{c'}(Q'))$, where $T_{c'}(Q')$ is as defined in
\eqref{Tc} being the $T$-matrix with zero guessing probabilities.
\end{proposition}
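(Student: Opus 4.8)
The plan is to exhibit a single linear functional (a covector indexed by the rows of the $T$-matrices) that annihilates every column of $T_{c'}(Q')$ but not some particular column of $T_{c,g}(Q)$. Since every element of the column space $C(T_{c'}(Q'))$ is a linear combination of columns of $T_{c'}(Q')$, it is annihilated as well, so the distinguishing column of $T_{c,g}(Q)$ will lie outside $C(T_{c'}(Q'))$. For an item $i$ write $q_i=\{j:Q_{ij}=1\}$ and $q'_i=\{j:Q'_{ij}=1\}$, and let $\mathrm{supp}(\AA)=\{j:A^j=1\}$. Then the $\AA$-th entry of the row of $T_{c,g}(Q)$ indexed by a combination $S$ is $\prod_{i\in S}t_i(\AA)$, with $t_i(\AA)=c_i$ when $\mathrm{supp}(\AA)\supseteq q_i$ and $t_i(\AA)=g_i$ otherwise, whereas the corresponding entry of $T_{c'}(Q')$ is $\big(\prod_{i\in S}c'_i\big)\,\mathbf 1\big(\mathrm{supp}(\AA)\supseteq\bigcup_{i\in S}q'_i\big)$.

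Because $Q'\neq Q$ while $Q_{1:k}=Q'_{1:k}=\mathcal I_k$, there is an item $i^*>k$ with $q_{i^*}\neq q'_{i^*}$. I would set $T=q'_{i^*}\subseteq\{1,\dots,k\}$, read as the combination of the identity items indexed by $q'_{i^*}$, and define the covector $w=-c'_{i^*}\,e_T+e_{T\cup\{i^*\}}$, where $e_S$ is the unit covector on the row indexed by the combination $S$ (both rows exist because the $T$-matrices are saturated). The point of using the identity block is that $\bigcup_{i\in T}q'_i=T$ and $\bigcup_{i\in T\cup\{i^*\}}q'_i=T\cup q'_{i^*}=T$, so on every column $\AA$ of $T_{c'}(Q')$ the two relevant indicators coincide (both equal $\mathbf 1(\mathrm{supp}(\AA)\supseteq T)$), and hence $w^\top\mathrm{col}_\AA(T_{c'}(Q'))=\big(\prod_{i\in T}c'_i\big)\,\mathbf 1(\mathrm{supp}(\AA)\supseteq T)\,(-c'_{i^*}+c'_{i^*})=0$. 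Thus $w$ annihilates all of $C(T_{c'}(Q'))$.

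It then remains to find a profile $\AA_0$ with $w^\top\mathrm{col}_{\AA_0}(T_{c,g}(Q))\neq 0$. Evaluating $w$ on $T_{c,g}(Q)$ gives $w^\top\mathrm{col}_\AA(T_{c,g}(Q))=\big(\prod_{i\in T}t_i(\AA)\big)\,\big(t_{i^*}(\AA)-c'_{i^*}\big)$, so I need $\AA_0$ making the prefactor nonzero and $t_{i^*}(\AA_0)\neq c'_{i^*}$. Since $c\ncong g$, at most one of $c_{i^*}$ and $g_{i^*}$ equals $c'_{i^*}$, so at least one capability status of item $i^*$ produces $t_{i^*}(\AA_0)\neq c'_{i^*}$; the status is governed by whether $\mathrm{supp}(\AA_0)\supseteq q_{i^*}$, and (assuming each item requires at least one attribute, so $q_{i^*}\neq\emptyset$) both statuses are realizable. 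Taking $\AA_0$ all-ones to force capability, or all-ones except a single zero at some $j_0\in q_{i^*}$ to remove it, leaves the prefactor equal to a product of numbers $c_i$ and at most one $g_{j_0}$, which is nonzero.

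The main obstacle is precisely this last bookkeeping: keeping $\prod_{i\in T}t_i(\AA_0)$ nonzero while forcing $t_{i^*}(\AA_0)$ to the required value. This is immediate when $c_i,g_i\in(0,1)$, but if some $c_i$ or $g_i$ vanishes the coordinate of $\AA_0$ forced by the desired capability status can collide with the coordinates controlling the prefactor. This is exactly where $q_{i^*}\neq q'_{i^*}$ (not merely $Q'\neq Q$) is used: it supplies an attribute in the symmetric difference $q_{i^*}\triangle q'_{i^*}$ lying outside $T$ (or outside $q_{i^*}$) that can be toggled to set the capability status of $i^*$ without disturbing the prefactor. I would finish by splitting the degenerate parameter cases according to which side of $q_{i^*}\triangle q'_{i^*}$ this free attribute comes from.
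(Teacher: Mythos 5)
Your covector construction is, at its core, the same two-row device as the paper's proof: your row $T=q'_{i^*}$ is exactly the paper's row $h$ (the conjunction $B_{Q'}(I_{i_1}\wedge\dots\wedge I_{i_{l'}})$ of identity items for the attributes item $i^*$ requires under $Q'$), and $T\cup\{i^*\}$ plays the role of its single-item row $l$; where the paper observes that these two rows of $T_{c'}(Q')$ are proportional and then hunts for a column of $T_{c,g}(Q)$ whose restriction to them escapes the rank-one span, you package the same rank-one degeneracy as the explicit annihilator $w=-c'_{i^*}e_T+e_{T\cup\{i^*\}}$. That packaging is genuinely cleaner (using $T\cup\{i^*\}$ rather than $\{i^*\}$ makes the annihilation exact without dividing by $\prod_{i\in T}c'_i$), and when all $c_i,g_i$ are nonzero your argument is complete and shorter than the paper's.

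The gap is in the degenerate cases, and your sketched fix does not close it. Take $k=2$, $m=3$, $Q_{1:2}=Q'_{1:2}=\mathcal I_2$, $q_3=\{1\}$, $q'_3=\{1,2\}$, so $i^*=3$ and $T=\{1,2\}$, with $g_1=0$ and $c_3=c'_3$ (all consistent with $c\ncong g$: only $c_1\neq 0$, $c_2\neq g_2$, $c_3\neq g_3$ are forced). Then $w^\top\mathrm{col}_{\mathbf A}(T_{c,g}(Q))=t_1(\mathbf A)t_2(\mathbf A)\bigl(t_3(\mathbf A)-c'_3\bigr)$ vanishes for \emph{every} $\mathbf A$: if $A^1=1$ then $t_3=c_3=c'_3$, and if $A^1=0$ then $t_1=g_1=0$. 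So your functional is identically zero on $T_{c,g}(Q)$, and no choice of column $\mathbf A_0$ can rescue this fixed $w$. The symmetric-difference toggle fails here for a structural reason: since $q_{i^*}\subsetneq q'_{i^*}$, the free attribute $q_{i^*}\triangle q'_{i^*}=\{2\}$ lies outside $q_{i^*}$, so toggling it cannot change item $i^*$'s capability status, while denying capability forces zeroing an attribute inside $q_{i^*}\subseteq T$ that carries $g_j=0$. The dual configuration $g_{i^*}=c'_{i^*}$ with $c_{i_0}=0$ for some $i_0\in q_{i^*}\cap q'_{i^*}$ kills $w$ the same way (granting capability forces $A^{i_0}=1$, hence $t_{i_0}=0$). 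You cannot dismiss these as pathologies by restricting to $c_i,g_i\in(0,1)$: the paper explicitly allows arbitrary real parameters, and crucially Proposition \ref{PropSG} invokes Proposition \ref{PropComp} with the transformed parameters $c-g'$ and $g-g'$, whose individual components routinely vanish even for interior $(c,g,g')$. Handling these configurations requires abandoning the fixed pair of rows and comparing different rows and columns — this is precisely the content of the paper's subcases b2, c2, c3, d1, d2 and Case 2 (e.g., playing the $\mathbf 0$ column against a single-attribute column $e_{j'}$ through rows whose entries are $\prod_{i'\leq j'}g_{i'}$ versus $c_{j'}\prod_{i'<j'}g_{i'}$), and it is the bulk of the actual proof, not bookkeeping. (A smaller shared caveat: like the paper, you implicitly need $q_{i^*}\neq\emptyset$ and, additionally for your $w$, $q'_{i^*}\neq\emptyset$, else the row $e_T$ does not exist in the saturated $T$-matrix.)
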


\begin{proposition}\label{PropIncomp}
Under the DINA model, that is, the ability indicator follows
\eqref{xidina}, assume that $Q$ is a complete matrix and $T(Q)$ is
saturated. Without loss of generality, let $Q_{1:k} =\mathcal
I_k$. If $c\ncong g$ and $Q'_{1:k}$ is incomplete, then for all
$c' \in \mathbb R^m$ there exists at least one nonzero column vector of
$T_{c,g}(Q)$ not in the column space $C(T_{c'}(Q'))$.
\end{proposition}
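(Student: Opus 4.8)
The plan is to produce the witness as one of a carefully chosen pair of columns of $T_{c,g}(Q)$. Since $Q'_{1:k}$ is incomplete, I first fix an attribute $j_0\in\{1,\dots,k\}$ with $e_{j_0}\notin\{q'_1,\dots,q'_k\}$, where $q'_i$ denotes the $i$-th row of $Q'$. Let $u$ and $w$ be the columns of $T_{c,g}(Q)$ indexed by the profiles $\AA=e_{j_0}$ and $\AA=\mathbf 0$, and set $v=u-w$. Because $Q_{1:k}=\mathcal I_k$, we have $\xi^{j_0}_{DINA}(e_{j_0},Q)=1$ and $\xi^{j_0}_{DINA}(\mathbf 0,Q)=0$, so the entry of $v$ in the row $I_{j_0}$ equals $c_{j_0}-g_{j_0}$, which is nonzero since $c\ncong g$; hence $v\neq\mathbf 0$. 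As $C(T_{c'}(Q'))$ is a linear subspace, if I can show $v\notin C(T_{c'}(Q'))$ then at least one of $u,w$ lies outside $C(T_{c'}(Q'))$, and any vector outside a subspace is automatically nonzero. This yields the required nonzero column, so the entire argument reduces to proving $v\notin C(T_{c'}(Q'))$.

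To establish $v\notin C(T_{c'}(Q'))$ I would pass to the dual formulation: $v\notin C(T_{c'}(Q'))$ if and only if there is a linear dependency $\phi=\sum_T\phi_T e_T$ among the \emph{rows} of $T_{c'}(Q')$ (i.e. $\phi^\top T_{c'}(Q')=0$) with $\phi^\top v\neq0$. The structure to exploit is that, with zero guessing, each row of $T_{c'}(Q')$ is a scalar multiple $\big(\prod_{i\in T}c'_i\big)\mathbf 1(\AA\succeq\bigvee_{i\in T}q'_i)$ of a principal up-set indicator, so the columns of $T_{c'}(Q')$ depend on $\AA$ only through the capability set $S(\AA)=\{i:\AA\succeq q'_i\}$; by contrast the rows of $T_{c,g}(Q)$ are genuinely multilinear in the pairs $(c_i,g_i)$, and $v$ inherits an explicit product form supported only on those combinations $T$ meeting $\{i:q_i=e_{j_0}\}$. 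Incompleteness enters through the key fact that, restricted to the first $k$ items, the capability sets at $e_{j_0}$ and at $\mathbf 0$ coincide: $e_{j_0}\succeq q'_i$ together with $q'_i\neq e_{j_0}$ forces $q'_i=\mathbf 0$. Thus no AND-combination of the first $k$ items of $Q'$ can register the attribute-$j_0$ flip that the (complete) true item $j_0$ registers by Proposition \ref{PropRank}, and this collapse is exactly the discrepancy carried by $v$. The plan is to assemble $\phi$ from the row dependencies of $T_{c'}(Q')$ forced by coincidences among the joins $\bigvee_{i\in T}q'_i$, seeded by this collapse, and to verify that the multilinear product structure of $v$ prevents $\phi^\top v=0$.

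The main obstacle is carrying out this last step in full generality. The quantifier is over all $c'\in\mathbb R^m$, so I cannot argue by genericity in $c'$: the construction must survive even when some $c'_i$ vanish (which only manufactures additional row dependencies, and should only help) and, more seriously, when the rows of $Q'$ beyond the first $k$ are arbitrary. In particular a later row of $Q'$ may itself equal $e_{j_0}$, so that the collapse $S(e_{j_0})=S(\mathbf 0)$ fails globally and only its restriction to the first $k$ items persists; a functional supported on combinations of the first $k$ items alone is then insufficient and must be enlarged by incorporating further items. Resolving this appears to require either an induction on $m$ that peels off items one at a time, or an explicit M\"obius inversion over the lattice of item-combinations that isolates the coefficient attached to the attribute-$j_0$ direction and shows it cannot be reproduced from the join-subsemilattice generated by the rows of $Q'$. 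This bookkeeping, together with the interplay between the multilinear $(c,g)$-dependence on the true side and the up-set-indicator structure on the $Q'$ side, is the technical heart of the proposition and is presumably the reason it is deferred to the Appendix.
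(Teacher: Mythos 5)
There is a genuine gap: your argument is a program, not a proof. Everything reduces to the claim $v=u-w\notin C(T_{c'}(Q'))$, and you explicitly stop short of establishing it, offering two candidate techniques (induction on $m$, M\"obius inversion over the lattice of item combinations) without executing either. Your collapse observation does not supply the missing dual functional: $\{i\leq k: e_{j_0}\succeq q'_i\}=\{i\leq k:\mathbf 0\succeq q'_i\}$ only says that the two columns of $T_{c'}(Q')$ indexed by $e_{j_0}$ and $\mathbf 0$ agree on rows built from the first $k$ items, whereas a functional $\phi$ with $\phi^\top T_{c'}(Q')=0$ must annihilate \emph{all} $2^k$ columns simultaneously; producing such a $\phi$ with $\phi^\top v\neq 0$ is exactly the content of the proposition, and nothing in your setup forces its existence. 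Worse, your reduction commits in advance to the fixed column pair indexed by $e_{j_0}$ and $\mathbf 0$, a strictly stronger statement than what is claimed, and there is no reason it holds uniformly over $Q'$, $c'$, and the unconstrained rows $q'_{k+1},\dots,q'_m$ (you note yourself that $q'_i=e_{j_0}$ for some $i>k$ destroys the global collapse). The paper's proof, tellingly, cannot afford a fixed pair: its witness columns are chosen adaptively, differently in each case.

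For comparison, the paper's proof runs in two concrete steps, neither of which your plan engages. Step 1 is purely combinatorial: using only the first $k$ items it exhibits two ``and''-combinations whose rows coincide in $T(Q')$ but differ in $T(Q)$ --- trivially when two rows of $Q'_{1:k}$ coincide, via $B_{Q'}(I_1\wedge I_i)=B_{Q'}(I_1)$ when a single item requires several attributes, and, in the hard case (all first-$k$ rows of $Q'$ distinct, the multi-attribute items confined to the first $n$ attributes), via a pigeonhole/cardinality induction on the sets $\mathcal C_i$ showing $B_{Q'}(I_1\wedge\dots\wedge I_{h-1}\wedge I_{h+1}\wedge\dots\wedge I_n)=B_{Q'}(I_1\wedge\dots\wedge I_n)$ for some $h$. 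Step 2 exploits that with zero guessing the two corresponding rows of $T_{c'}(Q')$ are scalar multiples of one binary indicator, so the $2\times 2^k$ submatrix on those rows has pairwise proportional columns; one then picks --- adaptively, with zero-entry and zero-row cases handled separately --- two attribute profiles $\mathbf A,\mathbf A'$ differing in a single coordinate, at which $c_l\neq g_l$ makes the corresponding $2\times 2$ block of $T_{c,g}(Q)$ non-proportional, whence one of those two columns lies outside $C(T_{c'}(Q'))$. If you wish to salvage your dual formulation, Step 1 is precisely the device that manufactures a usable row dependency of $T_{c'}(Q')$ to seed $\phi$; without it, the bookkeeping you defer is not a technicality but the entire proof.
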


In the statement of Propositions \ref{PropComp} and \ref{PropIncomp}, $c_i$, $g_i$, and $c'_i$ can be any real numbers and are not restricted to be in $[0,1]$.
Propositions \ref{PropComp} and \ref{PropIncomp} are the central
results of this paper, whose proofs are delayed to the Appendix.
To state the next proposition, we define matrix
\begin{equation}\label{tildeT}
\tilde{T}_{c,g}(Q)=\left(
\begin{array}{c}
T_{c,g}(Q) \\
\EE
\end{array}
\right) ,
\end{equation}
that is, we add one more row of one's to the original $T$-matrix.

\begin{proposition}
\label{PropSG}Under the DINA model, that is, the ability indicator
follows \eqref{xidina}, suppose that $Q$ is a complete matrix,
$Q'\nsim Q$, $T$ is saturated, and $c\ncong g$. Then,  for all
$c,g, c', g'\in [0,1]^m$, there exists one column vector of
$\tilde{T}_{c,g}(Q)$ (depending on $c,g,c',g'$) not in $C(\tilde
T_{c',g'}(Q'))$. In addition, $\tilde T_{c,g}(Q)$ is of full
column rank.
\end{proposition}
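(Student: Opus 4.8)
The plan is to prove the two assertions separately, both by exploiting the fact that passing from the pure-indicator matrix to the matrix incorporating slipping and guessing is an \emph{invertible} row transformation depending only on $(c,g)$ and not on $Q$. Write $T(Q)=T_{\mathbf 1,\mathbf 0}(Q)$ for the pure-indicator matrix ($c=\one$, $g=\mathbf 0$, so its rows are the $B_Q$'s) and $\tilde T(Q)$ for its one-augmented version as in \eqref{tildeT}. I index the rows of the augmented matrices by item combinations, including the empty combination, which is precisely the appended row $\EE$. Because $\xi^i_{DINA}(\AA,Q)\in\{0,1\}$, the entry of the row $I_{i_1}\wedge\cdots\wedge I_{i_l}$ at profile $\AA$ equals $\prod_h[(c_{i_h}-g_{i_h})\xi^{i_h}_{DINA}(\AA,Q)+g_{i_h}]$ by \eqref{ProdBcg}; expanding over sub-combinations writes this as a linear combination of the pure-indicator rows, the $\EE$ row playing the role of the empty sub-combination. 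Hence $\tilde T_{c,g}(Q)=A(c,g)\,\tilde T(Q)$ for a square matrix $A(c,g)$ that is lower triangular in the subset order, depends only on $(c,g)$, and factors as a tensor product $A(c,g)=\bigotimes_{i=1}^m\left(\begin{smallmatrix}1&0\\ g_i& c_i-g_i\end{smallmatrix}\right)$ over the items.

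First I would settle the full-rank claim. Since $c\ncong g$, each diagonal block $\left(\begin{smallmatrix}1&0\\ g_i& c_i-g_i\end{smallmatrix}\right)$ is invertible, so $A(c,g)$ is invertible and $\tilde T_{c,g}(Q)$ has the same column rank as $\tilde T(Q)$. By Proposition \ref{PropRank}, the first $2^k-1$ rows of $T(Q)$ have rank $2^k-1$ after deleting the zero (zero-profile) column; the appended $\EE$ row places a nonzero entry in exactly that deleted column, so all $2^k$ columns of $\tilde T(Q)$ are linearly independent. Thus $\tilde T(Q)$, and therefore $\tilde T_{c,g}(Q)$, has full column rank $2^k$.

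For the non-membership claim I would argue by contradiction, assuming $C(\tilde T_{c,g}(Q))\subseteq C(\tilde T_{c',g'}(Q'))$. Consider first the generic case $c'\ncong g'$, so $A(c',g')$ is invertible too. Left-multiplying both matrices by $A(c',g')^{-1}$ preserves column-space containment and turns $\tilde T_{c',g'}(Q')$ into $\tilde T(Q')$. The decisive point is that the matrices $A(\cdot,\cdot)$ form a group under multiplication: since the blocks $\left(\begin{smallmatrix}1&0\\ g_i& c_i-g_i\end{smallmatrix}\right)$ are closed under products and inverses, $A(c',g')^{-1}A(c,g)=A(\tilde c,\tilde g)$ for some $(\tilde c,\tilde g)$ with $\tilde c_i-\tilde g_i=(c_i-g_i)/(c'_i-g'_i)\neq0$, i.e. $\tilde c\ncong\tilde g$. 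Hence the left matrix becomes $A(\tilde c,\tilde g)\tilde T(Q)=\tilde T_{\tilde c,\tilde g}(Q)$, and the containment reads $C(\tilde T_{\tilde c,\tilde g}(Q))\subseteq C(\tilde T(Q'))$. Deleting the appended row and projecting onto the combination rows gives $C(T_{\tilde c,\tilde g}(Q))\subseteq C(T(Q'))=C(T_{\mathbf 1}(Q'))$. This contradicts Proposition \ref{PropComp} when some $k$ rows of $Q'$ form a complete matrix (rearranged so that $Q'_{1:k}=\mathcal I_k$; then $Q'\nsim Q$ forces $Q'\neq Q$), and Proposition \ref{PropIncomp} when $Q'_{1:k}$ is incomplete, both applied with the zero-guessing target $c'=\mathbf 1\in\mathbb R^m$ and the admissible parameters $(\tilde c,\tilde g)$, $\tilde c\ncong\tilde g$. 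This is where the appended row is essential: it is exactly what makes $A(c,g)$ the full, invertible tensor product (including the empty-combination component) and thereby lets the group reduction collapse the cross-terms that otherwise obstruct a direct appeal to the propositions.

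The remaining, and I expect most delicate, point is the degenerate case in which $c'_i=g'_i$ for one or more items, where $A(c',g')$ is singular and cannot be inverted. Here I would observe that such an item $i$ contributes nothing new: its row $B_{c',g',Q'}(I_i)=g'_i\EE$ is a multiple of the appended row, and by \eqref{ProdBcg} every combination row involving $i$ is the same multiple of the combination row with $i$ deleted. Those rows are therefore redundant, and removing them from both $\tilde T_{c,g}(Q)$ and $\tilde T_{c',g'}(Q')$ preserves column-space membership while restricting the comparison to the items $J=\{i:c'_i\neq g'_i\}$, on which the guessing parameters are non-degenerate and the group reduction above applies. The bookkeeping that needs care is to confirm that the restricted matrices still meet the hypotheses of Propositions \ref{PropComp} and \ref{PropIncomp}, in particular that completeness or incompleteness is correctly inherited by the sub-collection $J$; this case-tracking, rather than the main group-theoretic argument, is where I expect the real effort to lie.
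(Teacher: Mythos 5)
Your framework (the tensor factorization $\tilde T_{c,g}(Q)=A(c,g)\,\tilde T(Q)$ with $A(c,g)=\bigotimes_{i}\left(\begin{smallmatrix}1&0\\ g_i& c_i-g_i\end{smallmatrix}\right)$, plus Lemma \ref{LemColT} and Propositions \ref{PropComp}--\ref{PropIncomp}) is sound, and both your full-rank argument and your generic case $c'\ncong g'$ are correct and essentially the paper's argument in cleaner algebraic packaging. The genuine gap is the degenerate case $c'_i=g'_i$, and it is not the routine bookkeeping you hope for: your fallback --- delete all combination rows involving degenerate items and compare the restricted matrices on $J=\{i: c'_i\neq g'_i\}$ --- can fail outright. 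Take $m>k$, let $Q'$ agree with $Q$ on every item except item $m$, set $c'_m=g'_m$ and $(c'_i,g'_i)=(c_i,g_i)$ for $i\in J=\{1,\dots,m-1\}$. Then $Q_J=Q'_J$ and the restricted matrices are \emph{identical}, so the restricted comparison can never produce the needed column; moreover, if a degenerate item is one of the identity-row items of $Q$, the restricted $Q_J$ loses completeness and the propositions do not apply to the restricted pair at all. Yet the proposition remains true in this example precisely because of the rows you discarded: every $v\in C(\tilde T_{c',g'}(Q'))$ satisfies the proportionality constraint $v_{S\cup\{m\}}=g'_m\,v_S$ for each combination $S\not\ni m$ (with $v_\emptyset$ the appended-row entry), whereas the columns of $\tilde T_{c,g}(Q)$ indexed by $\AA=\mathbf 0$ and $\AA=\mathbf 1$ give $(c_m-g_m)\xi^m_{DINA}(\AA,Q)+g_m\in\{g_m,c_m\}$ at $S=\emptyset$, and since $c_m\neq g_m$ at least one of these two columns violates the constraint. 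So the restriction throws away exactly the information that distinguishes the matrices.

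The paper closes this case by an asymmetric normalization that never requires inverting $c'-g'$: in your own notation it is the translation element $D_{g'}=\bigotimes_i\left(\begin{smallmatrix}1&0\\ -g'_i&1\end{smallmatrix}\right)$, invertible for \emph{every} $g'$, which sends $\tilde T_{c,g}(Q)\mapsto T_{c-g',\,g-g'}(Q)$ and $\tilde T_{c',g'}(Q')\mapsto T_{c'-g',\,\mathbf 0}(Q')$ (the paper constructs this $D_{g'}$ by induction on the size of the item combination rather than via the tensor factorization, but it is the same matrix). The point you missed is that Propositions \ref{PropComp} and \ref{PropIncomp} are deliberately stated for \emph{arbitrary real} $c'$, zeros included (see the sentence immediately after Proposition \ref{PropIncomp}), and their ``Case 2'' (zero rows of $T_{c'}(Q')$) is exactly the degenerate situation $c'_i=g'_i$ after this reduction. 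Replacing your full inversion $A(c',g')^{-1}$ by $D_{g'}$ handles generic and degenerate $(c',g')$ uniformly and completes the proof; as written, your proposal proves the proposition only when $c'\ncong g'$.
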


\begin{lemma}
\label{LemColT}Consider two matrices $T_{1}$ and $T_{2}$ of the same
dimension. If $C(T_{1})\subseteq C(T_{2})$, then for any matrix $D$
of appropriate dimension for multiplication, we have%
\begin{equation*}
C(DT_{1})\subseteq C(DT_{2}).
\end{equation*}

Conversely, if the $l$-th column vector of $DT_{1}$ does not belong
to $C(DT_{2})$, then the $l$-th column vector of $T_{1}$ does not belong to $%
C(T_{2})$.
\end{lemma}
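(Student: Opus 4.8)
The plan is to reduce both claims to the elementary fact that a column-space inclusion is equivalent to a right-factorization. First I would record the characterization: for two matrices $M$ and $M'$ with the same number of rows, $C(M')\subseteq C(M)$ holds if and only if there is a matrix $P$ with $M'=MP$. The ``if'' direction is immediate, since each column of $MP$ is a linear combination of the columns of $M$; the ``only if'' direction follows by expressing each column of $M'$ as a linear combination of the columns of $M$ and collecting the resulting coefficient vectors as the columns of $P$.

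For the forward statement, I would apply this characterization to the hypothesis $C(T_1)\subseteq C(T_2)$ to obtain $T_1=T_2 P$ for some matrix $P$. Left-multiplying by $D$ gives $DT_1=(DT_2)P$, so every column of $DT_1$ is a linear combination of the columns of $DT_2$, which is exactly $C(DT_1)\subseteq C(DT_2)$.

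For the converse I would prove the contrapositive of a column-wise version of the same computation. Let $t_l$ denote the $l$-th column of $T_1$. If $t_l\in C(T_2)$, then $t_l=T_2 p$ for some vector $p$, and hence the $l$-th column of $DT_1$, namely $Dt_l=(DT_2)p$, belongs to $C(DT_2)$. Contraposing this implication yields precisely the stated claim: if the $l$-th column of $DT_1$ is not in $C(DT_2)$, then $t_l$ cannot lie in $C(T_2)$.

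Since every step is a direct consequence of linearity together with the factorization characterization, there is no genuine obstacle here; the only point requiring a line of care is the ``only if'' direction of the characterization, where the assumption that the two matrices have the same number of rows is used. This lemma is purely a formal tool: its intended role is to transfer the ``column not in the column space'' conclusions of Propositions \ref{PropComp}, \ref{PropIncomp}, and \ref{PropSG} through linear maps $D$, and the contrapositive phrasing of the converse is tailored exactly to that usage.
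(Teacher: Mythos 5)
Your proposal is correct, and it is essentially a fully spelled-out version of the paper's own proof, which simply asserts that left-multiplication by $D$ is a linear (row) transformation and that the conclusion is "immediate by basic linear algebra." Your factorization characterization ($C(T_1)\subseteq C(T_2)$ iff $T_1=T_2P$) and the column-wise contrapositive are exactly the routine details the paper leaves implicit, so there is no substantive difference in approach.
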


\begin{proof}[Proof of Lemma \protect\ref{LemColT}]
Note that $DT_i$ is just a linear row transform of $T_i$ for
$i=1,2$. The conclusion is immediate by basic linear algebra.
\end{proof}

\bigskip

\begin{proof}[Proof of Proposition \ref{PropSG}]
According to Propositions \ref{PropComp} and \ref{PropIncomp} and
Lemma \ref{LemColT}, it is sufficient to show that there exists a
matrix $D$ such that
$$D \tilde T_{c,g}(Q) = T_{c-g',g-g'}(Q), \quad D \tilde T_{c',g'}(Q') = T_{c'-g',\mathbf 0} (Q')\triangleq  T_{c'_{g'}}(Q'),$$
where $c'_{g'}= c'-g'$.
Once we obtain such a linear transformation, according to Propositions \ref{PropComp} and \ref{PropIncomp}, there exists a column vector in $T_{c-g',g-g'}(Q)$ that is not in the column space of $T_{c'_{g'}}(Q')$, as long as $Q\nsim Q'$. Then the same column vector in $\tilde T_{c,g}(Q)$ is not in the column space of $\tilde T_{c',g'}(Q')$. Thereby, we are able to conclude the proof.

In what follows, we construct such a $D$ matrix. Let $g^*=
(g_1^*,...,g_m^*)$. We show that there exists a matrix $D_{g^*}$
only depending on $g^*$ so that $D_{g^*}\tilde T_{c,g}(Q) =
T_{c-g^*,g-g^*}(Q)$. Note that each row of $D_{g^*}\tilde
T_{c,g}(Q)$ is just a row linear transform of $ \tilde
T_{c,g}(Q)$. Then, it is sufficient to show that each row vector
of $ T_{c-g^*,g-g^*}(Q)$ is a linear transform of rows of
$\tilde T_{c,g}(Q)$ with coefficients only depending on
$g^*$. We prove this by induction.

First, note that
$$B_{c-g^*, g-g^*,Q}(I_i) = B_{c,g,Q}(I_i) -  g_i^* \EE.$$
Then all row vectors of $T_{c-g^*,g-g^*}(Q)$ of the form
$B_{c-g^*, g-g^*,Q}(I_i) $ are inside the row space of $\tilde
T_{c,g}(Q)$ with coefficients only depending on $g^*$. Suppose
that all the vectors of the form
$$B_{c-g^*, g-g^*, Q}(I_{i_1}\wedge ...\wedge I_{i_l})$$
for all $1\leq l\leq j$ can be written linear combinations of the
row vectors of $ \tilde T_{c,g}(Q)$ with coefficients only depending on
$g^*$. Then, we consider
$$B_{c, g, Q}(I_{i_1}\wedge ...\wedge I_{i_{j+1}}) = \Upsilon_{h=1}^{j+1}\left(B_{c-g^*, g-g^*, Q}(I_{i_h}) + g_{i_h}^* \EE\right ).$$
The left hand side is just a row vector of $\tilde T_{c,g}(Q)$. We expand the right hand side of the above display. Note that the last term is precisely
$$B_{c-g^*, g-g^*, Q}(I_{i_1}\wedge ...\wedge I_{i_{j+1}})=\Upsilon_{h=1}^{j+1} B_{c-g^*, g-g^*, Q}(I_{i_h}) .$$
The rest terms are all of the form $B_{c-g^*, g-g^*, Q}(I_{i_1}\wedge ...\wedge I_{i_l})$ for $1\leq l\leq j$ multiplied by  coefficients only depending on $g^{*}$. Therefore, according to the induction assumption, we have that
$$B_{c-g^*, g-g^*, Q}(I_{i_1}\wedge ...\wedge I_{i_{j+1}})$$
can be written as linear combinations of rows of $\tilde
T_{c,g}(Q)$ with coefficients only depending on $g^*$. Therefore,
we can construct the matrix $D_{g^*}$ accordingly. Lastly, we choose
$g^* = g'$ and conclude that
$$D_{g'} \tilde T_{c,g}(Q) = T_{c-g',g-g'}(Q), \quad D_{g'} \tilde T_{c',g'}(Q') = T_{c'_{g'}}(Q').$$
By Propositions \ref{PropComp} and \ref{PropIncomp}, there exists a column vector of $T_{c-g',g-g'}(Q)$ not in the column space of $T_{c'_{g'}}(Q')$. Furthermore, according to Lemma \ref{LemColT}, we conclude the first part of the Proposition.

In addition, consider $D_{g}\tilde T_{c,g}(Q) = T_{c_{g}} (Q)$ where $c_{g} = c-g \ncong \mathbf 0$. By construction as in \eqref{A}, after removing the first zero column, $T_{c_{g}} (Q)$ is of rank $2^{k}-1$. Therefore, the matrix
\begin{equation*}
\left(\begin{array}{c}
T_{c_{g}} (Q)\\
\mathbf E
\end{array}\right)
\end{equation*}
is of full rank. Note that each row of the above matrix is a linear transform of $\tilde T_{c,g}(Q)$. Thus, $\tilde T_{c,g}(Q)$ is a full rank matrix too.
Thereby, we conclude the proof of the proposition.
\end{proof}

%For the NIDA model, we define
%\begin{equation}\label{tildeTstar}
%\tilde{T}^*_{c,g}(Q)=\left(
%\begin{array}{c}
%T_{c,g}^*(Q) \\
%\EE
%\end{array}
%\right) ,
%\end{equation}
%and collect the following proposition.
%
%\begin{proposition}\label{PropDINA}
%Under the NIDA model, that is, the responses following distributions \eqref{nida}, suppose that $Q$ is a complete matrix, $Q'\nsim Q$,
%$T$ is saturated, and $c\ncong g$. Let $V^*$ denote the last column
%of $\tilde T^*_{c,g}(Q)$. Then, there exists one column vector of
%$\tilde{T}^*_{c,g}(Q)$ (independent of $c$), denoted by $V$, such that
%either $V$ or $V^*$ is not in $C(\tilde T^*_{c',g}(Q'))$ for all
%$c'\in [0,1]^m$. In addition, $\tilde T^*_{c,g}(Q)$ is of full column
%rank.
%\end{proposition}

For the DINO model, we define a similar matrix
\begin{equation}\label{tildeTstar}
\tilde{U}_{c,g}(Q)=\left(
\begin{array}{c}
U_{c,g}(Q) \\
\EE
\end{array}
\right) ,
\end{equation}
and collect the following proposition.

\begin{proposition}
\label{PropDINO}Under the setting of the DINO model, that is, the
ability indicator follows \eqref{xidino}, suppose that $Q$ is a
complete matrix, $Q'\nsim Q$, $U$ is saturated, and $c\ncong g$.
Then, for all $c,g,c',g'\in [0,1]^m$, there exists one column
vector of $\tilde{U}_{c,g}(Q)$ not in $C(\tilde U_{c',g'}(Q'))$. In
addition, $\tilde U_{c,g}(Q)$ is of full column rank.
\end{proposition}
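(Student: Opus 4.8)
The plan is to reduce this statement to its DINA counterpart, Proposition \ref{PropSG}, by making the model duality explicit and purely linear-algebraic, so that the heavy lifting is already done. The starting point is the pointwise identity
$$\xi^i_{DINO}(\AA,Q) = 1 - \xi^i_{DINA}(\mathbf 1 - \AA, Q),$$
which says a subject is DINO-capable on item $i$ exactly when its attribute complement fails to be DINA-capable. Let $\sigma$ be the involution $\AA \mapsto \mathbf 1 - \AA$ on $\{0,1\}^k$ and let $P$ be the $2^k\times 2^k$ permutation matrix reordering the $\AA$-columns according to $\sigma$. First I would record the single-item relation $F_Q(I_i) = \EE - B_Q(I_i)P$, then carry it through the slipping--guessing shift $F_{c,g,Q}(I_i) = (c_i-g_i)F_Q(I_i)+g_i\EE$.

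Second, I would propagate the identity through the combination rules. The crucial point is that the DINO ``OR'' construction $F_{c,g,Q}(I_{i_1}\vee\cdots\vee I_{i_l}) = \EE - \Upsilon_{j=1}^l(\EE - F_{c,g,Q}(I_{i_j}))$ turns, after applying $\sigma$, into the DINA ``AND'' product $\Upsilon_{j=1}^l B_{c',g',Q}(I_{i_j})$, with the parameter correspondence $c' = \mathbf 1 - g$, $g' = \mathbf 1 - c$. Matching the row of $U$ indexed by the item set $\{i_1,\dots,i_l\}$ with the row of $T$ indexed by the same set (both matrices are saturated), this yields the matrix identity
$$U_{c,g}(Q) = \mathbf J - T_{\mathbf 1 - g,\, \mathbf 1 - c}(Q)\,P,$$
where $\mathbf J$ is the all-ones matrix. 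I would also note that the substitution $(c,g)\mapsto(\mathbf 1-g,\mathbf 1 - c)$ preserves the hypothesis $c\ncong g$, since $1-g_i\neq 1-c_i\iff c_i\neq g_i$, and carries completeness and saturation across unchanged.

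Third, the additive term $\mathbf J$ is removed by exploiting the appended row of ones in the augmented matrices. Writing $M=2^m-1$ for the number of rows of a saturated $U(Q)$, let
$$D = \left(\begin{array}{cc} -\mathcal I_M & \EE^\top \\ \mathbf 0 & 1 \end{array}\right)$$
be the invertible row operation that subtracts each of the first $M$ rows from the appended all-ones row ($\EE^\top$ denoting the length-$M$ column of ones). A direct computation gives $D\,\tilde U_{c,g}(Q) = \tilde T_{\mathbf 1-g,\,\mathbf 1-c}(Q)\,P$ and likewise $D\,\tilde U_{c',g'}(Q') = \tilde T_{\mathbf 1-g',\,\mathbf 1-c'}(Q')\,P$, the all-ones row being exactly what absorbs $\mathbf J$. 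Now Proposition \ref{PropSG}, applied to the DINA parameters $(\mathbf 1-g,\mathbf 1-c)$ and $(\mathbf 1-g',\mathbf 1-c')$, supplies a column of $\tilde T_{\mathbf 1-g,\,\mathbf 1-c}(Q)$ not in $C(\tilde T_{\mathbf 1-g',\,\mathbf 1-c'}(Q'))$ and shows $\tilde T_{\mathbf 1-g,\,\mathbf 1-c}(Q)$ has full column rank. Right multiplication by the permutation $P$ changes neither column spaces nor rank, and the converse half of Lemma \ref{LemColT} (applied with the matrix $D$) transfers the non-membership back from $D\tilde U_{c,g}(Q)$ to $\tilde U_{c,g}(Q)$. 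Finally, since $D$ and $P$ are invertible, $\tilde U_{c,g}(Q)=D^{-1}\tilde T_{\mathbf 1-g,\,\mathbf 1-c}(Q)P$ inherits full column rank, settling both assertions.

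I expect the main obstacle to be the bookkeeping in the second step: verifying that the nonlinear ``OR'' recursion for $U_{c,g}$ dualizes exactly to the ``AND'' product for $T_{c',g'}$, tracking the permutation $P$ through the element-wise products $\Upsilon$, and confirming that the additive all-ones matrix is precisely what the appended $\EE$ row cancels under $D$. Everything after that is an application of results already in hand (Proposition \ref{PropSG} and Lemma \ref{LemColT}); the substantive content is packaging the entire model duality as the single invertible transformation $D(\,\cdot\,)P$ together with the parameter swap $(c,g)\mapsto(\mathbf 1-g,\mathbf 1-c)$.
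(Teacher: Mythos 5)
Your proposal is correct: every step checks out, including the single-item identity $F_{c,g,Q}(I_i)=\EE-B_{1-g,1-c,Q}(I_i)P$, its propagation through the ``OR'' recursion (element-wise products commute with applying the common permutation $P$), the preservation of $c\ncong g$ under the swap $(c,g)\mapsto(\mathbf 1-g,\mathbf 1-c)$, and the fact that both $D$ and $P$ are independent of the parameters and of the matrix, which is what Lemma \ref{LemColT} needs since the same left multiplier must act on both $\tilde U_{c,g}(Q)$ and $\tilde U_{c',g'}(Q')$. The paper's proof rests on the same duality (its Lemma \ref{PropDual}, which is exactly your identity $\mathbb E-U_{c,g}(Q)=T_{1-g,1-c}(Q)$ up to the column complementation), but it then re-enters the machinery behind Proposition \ref{PropSG}: it composes the duality with the matrix $D_{g^*}$ (taking $g^*=\mathbf 1-c'$) built in that proposition's proof, producing a non-square map $D'_{c'}$, depending on $c'$, that sends $\tilde U_{c,g}(Q)$ to $T_{c'-g,\,c'-c}(Q)$ and $\tilde U_{c',g'}(Q')$ to the zero-guessing matrix $T_{c'-g'}(Q')$, and then invokes Propositions \ref{PropComp} and \ref{PropIncomp} directly (note those allow parameters outside $[0,1]$, which is essential there since $c'-g$ can be negative). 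You instead black-box Proposition \ref{PropSG} itself via the single parameter-free, square, invertible transformation $\tilde U_{c,g}(Q)=D^{-1}\tilde T_{\mathbf 1-g,\mathbf 1-c}(Q)P$. What this buys you: a shorter, more modular argument in which the guessing-elimination step is not redone, and the full-column-rank claim is immediate from invertibility of $D$ and $P$, whereas the paper needs a separate closing step (observing that $T_{c'-g,c'-c}(Q)$ augmented by the row $\EE$ is full rank by Proposition \ref{PropSG} and is a row transform of $\tilde U_{c,g}(Q)$). What the paper's more granular route buys is an explicit landing on the zero-guessing propositions with concrete transformed matrices, but nothing in the logic requires that detour; your reduction is a genuine, if modest, streamlining of the same duality idea.
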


\begin{lemma}\label{PropDual} Let $T(Q)$ be the $T$-matrix under the
DINA model with $c=1$ and $g=0$ and $U(Q)$ be the $U$-matrix
under DINO model  with $c=1$ and $g=0$. We are able to arrange the
column order of $T(Q)$ and $U(Q)$ so that
$$T(Q) + U(Q) = \mathbb E,$$
where $\mathbb E$ is a matrix of appropriate dimensions with all entries being one's.
\end{lemma}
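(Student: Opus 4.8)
The claim is that for the DINA $T$-matrix and the DINO $U$-matrix, both taken with perfect parameters $c=1$, $g=0$, one can reorder columns so that $T(Q) + U(Q) = \mathbb E$, the all-ones matrix. Let me think about what each entry means.

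With $c=1$, $g=0$, the entry of $T(Q)$ in the row for $I_{i_1} \wedge \cdots \wedge I_{i_l}$ and column $\mathbf A$ is $\prod_{h=1}^l \xi^{i_h}_{DINA}(\mathbf A, Q)$, which is $1$ exactly when $\mathbf A$ is capable of all of $i_1,\dots,i_l$ under the AND gate. The corresponding entry of $U(Q)$ in the row for $I_{i_1} \vee \cdots \vee I_{i_l}$ is $F_Q(I_{i_1}\vee\cdots\vee I_{i_l})$ evaluated at $\mathbf A$.

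**The key idea: complement the attribute profiles.**

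My plan is to identify the right column reordering via the bitwise complement map on attribute profiles. For $\mathbf A \in \{0,1\}^k$ write $\bar{\mathbf A} = \mathbf 1 - \mathbf A$ (flip every bit). I claim the correct reordering sends column $\mathbf A$ of $U(Q)$ to the position of column $\bar{\mathbf A}$ of $T(Q)$, so that the identity $T(Q)+U(Q)=\mathbb E$ becomes the pointwise statement
$$\xi^i_{DINA}(\mathbf A, Q) + \xi^i_{DINO}(\bar{\mathbf A}, Q) = 1$$
at the single-item level, extended to combinations. The reason this should work: $\xi^i_{DINA}(\mathbf A,Q) = \mathbf 1(A^j \ge Q_{ij}\ \forall j)$ equals $0$ iff there exists $j$ with $A^j < Q_{ij}$, i.e. $A^j=0$ and $Q_{ij}=1$, i.e. $\bar A^j = 1 \ge Q_{ij}$ for some $j$, which is exactly $\xi^i_{DINO}(\bar{\mathbf A},Q)=1$. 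So the single-item identity is just a De~Morgan restatement of the definitions in \eqref{xidina} and \eqref{xidino}.

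**Lifting from single items to combinations.**

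The steps I would carry out, in order. First, establish the single-item complementarity $\xi^i_{DINA}(\mathbf A,Q) = 1 - \xi^i_{DINO}(\bar{\mathbf A},Q)$ directly from the definitions, as sketched above. Second, pair up the rows: the row of $T(Q)$ for the AND-combination $I_{i_1}\wedge\cdots\wedge I_{i_l}$ should be matched with the row of $U(Q)$ for the OR-combination $I_{i_1}\vee\cdots\vee I_{i_l}$ on the same index set $\{i_1,\dots,i_l\}$; since both matrices are saturated this is a bijection on rows and requires no reordering of rows beyond this natural pairing. Third, compute each entry of these paired rows at complementary columns and show they sum to $1$. For $T(Q)$ the combination entry is $\prod_h \xi^{i_h}_{DINA}(\mathbf A,Q)$ by \eqref{ProdB}; for $U(Q)$ the combination entry at $\bar{\mathbf A}$ is $1 - \prod_h(1 - \xi^{i_h}_{DINO}(\bar{\mathbf A},Q))$ by the definition $F_Q(I_{i_1}\vee\cdots\vee I_{i_l}) = \EE - \Upsilon(\EE - F_Q(I_{i_h}))$. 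Substituting the single-item identity $\xi^{i_h}_{DINO}(\bar{\mathbf A},Q) = 1 - \xi^{i_h}_{DINA}(\mathbf A,Q)$ turns $1 - \prod_h(1 - \xi^{i_h}_{DINO}(\bar{\mathbf A},Q))$ into $1 - \prod_h \xi^{i_h}_{DINA}(\mathbf A,Q)$, and adding the $T$ entry gives exactly $1$.

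**Where the work concentrates.**

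The only genuinely substantive point is verifying that a single global column permutation (namely $\mathbf A \mapsto \bar{\mathbf A}$) simultaneously achieves the identity for every row of both matrices. The step-by-step row-by-row computation above does this, but the thing to be careful about is that the AND-to-OR correspondence on rows must be on the \emph{same} underlying item set, and the bit-complement on columns must be applied \emph{globally} and consistently across all rows — one cannot use a different column reordering for different rows. I expect the main (modest) obstacle is just bookkeeping: confirming that the natural De~Morgan duality $\xi^i_{DINA}(\mathbf A) = 1 - \xi^i_{DINO}(\bar{\mathbf A})$ propagates through the AND/OR product constructions \eqref{ProdB} and the DINO combination formula uniformly, which reduces to the elementary set-theoretic fact that $\bigcap$ and $\bigcup$ are De~Morgan duals under complementation. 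Everything else is routine substitution.
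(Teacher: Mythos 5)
Your proof is correct and follows essentially the same route as the paper's: both establish the single-item duality $\xi^i_{DINA}(\AA,Q) = 1-\xi^i_{DINO}(\AA^c,Q)$ via the global complement map $\AA \mapsto \AA^c = \EE-\AA$ on column labels, pair the row for $I_{i_1}\wedge\cdots\wedge I_{i_l}$ in $T(Q)$ with the row for $I_{i_1}\vee\cdots\vee I_{i_l}$ in $U(Q)$ over the same item set, and propagate through the De Morgan identity $B_Q(I_{i_1}\wedge\cdots\wedge I_{i_l}) = \Upsilon_{h=1}^{l} \bigl(\EE - F_Q(I_{i_h})\bigr) = \EE - F_Q(I_{i_1}\vee\cdots\vee I_{i_l})$. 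Your explicit attention to the fact that a single column permutation must work uniformly across all rows is a point the paper handles implicitly, but there is no substantive difference in the argument.
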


\begin{proof}[Proof of Lemma \ref{PropDual}]
Consider a  $Q$-matrix, an attribute profile $\AA$, and an item
$i$. Let $\AA^c = \EE -\AA$ be the complimentary profile. Suppose
that $Q_{ij}=1$ for $1\leq j\leq n$ and $Q_{ij}=0$ for $n< j \leq
k$. Under the DINO model, $\xi_{DINO}^i(\AA,Q)=1$ if $\AA^j =1$ at
least for one $1\leq j \leq n$.  For the same $j$, $(\AA^c)^j =0$
and therefore $\xi_{DINA}^i(\AA^c, Q)=0$. That is,
$\xi^{i}_{DINO}(\AA,Q) = 1$ implies that
$\xi^i_{DINA}(\AA^{c},Q)=0$. Similarly we are able to obtain that
$\xi^{i}_{DINO}(\AA,Q) = 0$ implies that
$\xi^i_{DINA}(\AA^{c},Q)=1$. Therefore, if we arrange the columns
of $T(Q)$ and $U(Q)$ in such a way that the $\AA$-th column of
$U(Q)$ and the $\AA^c$-th column of $T(Q)$ have the same position,
then
$$B_Q(I_i) + F_Q (I_i) = \EE, $$
for all $1\leq i \leq m$. Note that
\begin{eqnarray*}B_Q (I_1\wedge ... \wedge I_l ) &=&\Upsilon_{i=1}^l B_Q (I_i)\\
&=&\Upsilon_{i=1}^l (\EE - F_Q (I_i))\\
&=&\EE - F_Q(I_1\vee...\vee I_l).
\end{eqnarray*}
Thus, we conclude the proof.
\end{proof}

\bigskip

\begin{proof}[Proof of Proposition \ref{PropDINO}]
Thanks to Propositions \ref{PropComp} and \ref{PropIncomp} and
Lemma \ref{LemColT}, it is sufficient to show that with an
appropriate order of the columns of $U_{c,g}(Q)$ there exists a
matrix $D'_{c'}$ only depending on $c'=(c'_1,...,c'_m)$ (independent of $Q$) such that
$$D'_{c'} \tilde U_{c,g}(Q) = T_{c'-g,c'-c}(Q) $$
for all $m\times k$ binary matrix $Q$. To establish that, we only need to show
that each row vector of $T_{c'-g,c'-c} (Q)$ can be written as a linear
combination of the row vectors of $\tilde U_{c,g}(Q)$. In addition, the
coefficients only depend on the $c'$ and are independent of
$c$, $g$, and $Q$.

We establish this by induction. By construction, we have that for each $i=1,...,m$
$$\EE - F_{c,g,Q}(I_i) = (1-c_i)\EE + (c_i - g_i) (\EE - F_Q (I_i)). $$
Note that each column of $U$ (and $T$) and each element in $F_Q(I_i)$
(and $B_Q(I_i)$) correspond to one attribute profile $\AA\in
\{0,1\}^k$. If we arrange the $\AA$-th position of $F_Q(I_i)$ and
$\AA^c$ position of $B_Q (I_i)$ to be the same, then from the
proof of Lemma \ref{PropDual} we obtain that $B_Q(I_i) = \EE - F_Q
(I_i)$. Therefore, $\EE - F_{c,g,Q}(I_i)  = B_{1-g,1-c,Q} (I_i)$.
Similarly, we obtain that
\begin{eqnarray*}\EE - F_{c,g,Q}(I_{i_1}\vee...\vee I_{i_{l}}) &=& \Upsilon_{j=1}^{l} (\EE - F_{c,g,Q}(I_{i_j}))\\
&=&\Upsilon_{j=1}^{l+1}B_{1-g,1-c,Q}(I_{i_j})\\
&=&B_{1-g,1-c,Q}(I_{i_1}\wedge...\wedge I_{i_{l}}),
\end{eqnarray*}
where $1-c = (1-c_1,...,1-c_m)$.
Let $\mathbb E$ be the matrix with all entries being one's. We essentially established that
$$\mathbb E - U_{c,g}(Q) = T_{1-g, 1-c}(Q).$$
We use the matrix $D_{g^*}$ constructed in Proposition \ref{PropSG} and obtain that
$$D_{1 -c'} \left(\begin{array}{c}\mathbb E - U_{c,g}(Q)\\
\mathbf E
\end{array}\right) = D_{1-c'} \tilde T_{1-g, 1-c}(Q)= T_{c'-g,c'-c }(Q).$$
Similarly, we have that
$$D_{1 -c'}
\left(\begin{array}{c}\mathbb E - U_{c',g'}(Q')\\
\mathbf E
\end{array}\right)= D_{1 -c'} \tilde T_{1-g',1-c'}(Q')
= T_{c'-g' }(Q').$$
Note that $\EE$ is a row vector of both $\tilde U_{c,g}(Q)$ and $\tilde U_{c',g'}(Q')$. Therefore, one can construct a matrix $D'_{c'}$ so that
$$D'_{c'} \tilde U_{c,g}(Q)=  T_{c'-g,c'-c }(Q), \quad  D'_{c'} \tilde U_{c',g'}(Q')=  T_{c'-g' }(Q').$$

Thanks to Propositions \ref{PropComp} and \ref{PropIncomp}, there
exists a column vector of $T_{c'-g,c'-c}(Q)$ not inside the column
space of $T_{c'-g'}(Q')$ whenever $c\ncong g$. Thanks to Lemmas
\ref{LemColT} and \ref{PropDual}, the corresponding column vector(s) of $\tilde
U_{c,g}(Q)$ is not inside the column space of $\tilde
U_{c',g'}(Q')$. In addition, note that
\begin{equation*}
\left(
\begin{array}{c}
T_{c'-g,c'-c}(Q) \\
\EE
\end{array}
\right)
\end{equation*}
is of full column rank (Proposition \ref{PropSG}) and can be obtained by a row transformation
of $\tilde U_{c,g}(Q)$. Therefore, $\tilde U_{c,g}(Q)$ is also of
full column rank. Thereby, we conclude the proof.
\end{proof}

\subsection{Proof of the theorems}

\begin{proof}[Proof of Theorem \ref{thmcg}]
Notice that the true parameters $c$ and $g$  form  consistent
estimators for themselves. Therefore, Theorem \ref{thmcg} is a
direct corollary of Theorem \ref{thmest}.
\end{proof}

\bigskip

\begin{proof}[Proof of Theorem \ref{thmest}]
By the law of large numbers,
\begin{equation*}
|T_{c,g}(Q)\mathbf p-\alpha|\rightarrow 0
\end{equation*}%
almost surely as $N\rightarrow \infty $. Therefore,
\begin{equation*}
S_{c,g}(Q)\rightarrow 0
\end{equation*}%
almost surely as $N\rightarrow \infty $. Note that $S_{c,g}(Q)$ is
a continuous function of $(c,g)$. The consistency of $(\hat c(Q),
\hat g(Q))$ implies that
$$S_{\hat c(Q), \hat g(Q)} (Q) \rightarrow 0,$$
in probability as $N\rightarrow \infty$.

For any $Q^{\prime }\nsim Q$, note that
\begin{equation*}
\left(\begin{array}{c}
\alpha  \\
1%
\end{array}%
\right) \rightarrow \tilde{T}_{c,g}(Q)\pp
\end{equation*}%
According to Proposition \ref%
{PropSG} and the fact that $\pp\succ \mathbf{0}$, there exists $\delta (c',g')>0$ such that $\delta (c',g')$ is continuous in $(c',g')$ and
\begin{equation*}
\inf_{\pp'}\left\vert \tilde{T}_{c',g'}(Q^{\prime })\pp' -\tilde{T}_{c,g}(Q)\pp \right\vert >\delta (c',g').
\end{equation*}%
By elementary calculus,
\begin{equation*}
\delta \triangleq \inf_{c',g'\in [ 0,1]^{m}}\delta
(c',g')>0
\end{equation*}%
and%
\begin{equation*}
\inf_{c',g'\in [0,1]^m,\pp'}\left\vert \tilde{T}_{c',g'}(Q^{\prime })\pp' -\tilde{T}_{c,g}(Q)\pp \right\vert  >\delta .
\end{equation*}%
Therefore,
\begin{equation*}
P\left(\inf_{c',g'\in [0,1]^m,\pp'}\left\vert \tilde{T}%
_{c',g'}(Q^{\prime })\pp'-\left(
\begin{array}{c}
\alpha  \\
1%
\end{array}%
\right) \right\vert >\delta/2 \right) \rightarrow 1,
\end{equation*}%
as $N\rightarrow \infty $. For the same $\delta $, we have%
\begin{equation*}
P(S_{\hat c (Q'),\hat g(Q')}(Q') > \delta/2)\geq P(\inf_{c',g'\in [0,1]^m}S_{c',g'}(Q')> \delta/2)=P\left(\inf_{c',g'\in [0,1]^m,\pp'}\left\vert T
_{c',g'}(Q^{\prime })\pp'-\alpha \right\vert >\delta/2 \right)
 \rightarrow 1.
\end{equation*}%
The above minimization in the last probability is subject to the
constraint that
\begin{equation*}
\sum_{\mathbf{A}\in \{0,1\}^{k}}p_{\mathbf{A}}=1.
\end{equation*}%
Together with the fact that there are only finitely many $m\times k$
binary matrices, we have
\begin{equation*}
P(\hat{Q}_{\hat c, \hat g}\sim Q)=1.
\end{equation*}%
We arrange the columns of $\hat{Q}_{\hat c,\hat g}$ so that $P(\hat{Q}_{\hat c, \hat g}=Q)\rightarrow 1$ as $N\rightarrow \infty $.

Now we proceed to the proof of consistency for
$\hat {\pp}$. Note
that%
\begin{eqnarray*}
\left\vert \tilde{T}_{\hat c(\hat Q_{\hat c, \hat g}),\hat g(\hat Q_{\hat c, \hat g})}(\hat Q_{\hat c, \hat g})\hat {\pp}-\left(
\begin{array}{c}
\alpha \\
1%
\end{array}%
\right) \right\vert &\overset{p}{\rightarrow }&0, \\
\left\vert \tilde{T}_{\hat c(Q),\hat g(Q)}(Q)\pp-\left(
\begin{array}{c}
\alpha \\
1%
\end{array}%
\right) \right\vert &\overset{p}{\rightarrow }&0.
\end{eqnarray*}%
Note that  $\tilde{T}_{c,g}(Q)$ is a full column rank matrix, $P(\hat Q_{\hat c,\hat g}=Q)\rightarrow 1$, $\hat c (Q)\rightarrow c$, $\hat g(Q) \rightarrow g$, and $T_{c,g}$ is continuous in $(c,g)$. Then, we obtain that $\hat {\pp}\rightarrow \mathbf
p$ in probability.
\end{proof}

\bigskip

\begin{proof}[Proof of Theorem \ref{thmdinocg}]
Similar to Theorem \ref{thmcg}, Theorem \ref{thmdinocg} is a direct corollary of Theorem \ref{thmdinoest}.
\end{proof}

\bigskip

\begin{proof}[Proof of Theorem \ref{thmdinoest}]
The proof of Theorem \ref{thmdinoest} is completely analogous to that of Theorem \ref{thmest}. Therefore, we omit the details.
\end{proof}

\appendix

\section{Technical proofs}\label{SecTech}

\begin{proof}[Proof of Proposition \protect\ref{PropComp}]
Note that $Q_{1:k}=Q_{1:k}^{\prime }=\mathcal{I}_{k}$. Let
$T(\cdot )$ be arranged as in \eqref{A}. Then,
$T(Q)_{1:(2^{k}-1)}=T(Q^{\prime })_{1:(2^{k}-1)}$. Given that
$Q\neq Q^{\prime }$, we have $T(Q)\neq T(Q^{\prime })$. We assume
that $T(Q)_{li}\neq T(Q')_{li}$, where $T(Q)_{li}$ is the entry in
the $l$-th row and $i$-th column. Since $
T(Q)_{1:(2^{k}-1)}=T(Q^{\prime })_{1:(2^{k}-1)}$, it is necessary
that $l\geq 2^{k}$. In addition, we let the $l$-th row correspond
to a single item (not combinations of multiples).

Suppose that the $l$-th row of the $T(Q^{\prime })$ corresponds to
an item
that requires attributes $i_{1},...,i_{l^{\prime }}$. Then, we consider $%
1\leq h\leq 2^{k}-1$, such that the $h$-th row of $T(Q^{\prime })$ is $%
B_{Q^{\prime }}(I_{i_{1}}\wedge ...\wedge I_{i_{l^{\prime }}})$.
Then, the $h $-th row vector and the $l$-th row vector of
$T(Q^{\prime })$ are identical.

Since $T(Q)_{1:(2^{k}-1)}=T(Q^{\prime })_{1:(2^{k}-1)}$, we have $%
T(Q)_{hj}=T(Q^{\prime })_{hj}=T(Q^{\prime })_{lj}$ for
$j=1,...,2^{k}-1$. If $T(Q)_{li}=0$ and $T(Q^{\prime })_{li}=1$, the
matrices $T(Q)$ and $T(Q^{\prime })$ look like

\begin{singlespace}
\begin{eqnarray*}
&&%
\begin{array}{ccccccccccccc}
&  &  &  & \text{ \ } &  &  &
\begin{array}{c}
\mbox{column }i \\
\downarrow \quad
\end{array}
&  &  &  &  &
\end{array}
\\
T(Q^{\prime }) &=&%
\begin{array}{c}
\\
\mbox{row }h\rightarrow  \\
\\
\\
\mbox{row }l\rightarrow  \\
\end{array}%
\left(
\begin{array}{cccccc}
0 & \mathcal{I} & \ast  & \ldots  & \ast  & \ldots  \\
\vdots  & \vdots  & \vdots  &  & \ldots  & \ldots  \\
\vdots  & \vdots  & \vdots  & \mathcal{I} & \ldots  & \ldots  \\
\vdots  & \vdots  & \vdots  & \vdots  &  &  \\
0 & \ast  & 1 & \ast  &  &  \\
0 & \ast  & \ast  & \ast  &  &
\end{array}%
\right) ,
\end{eqnarray*}
and
\begin{eqnarray*}
&&%
\begin{array}{ccccccccccccc}
&  &  &  &  &  & \text{ \ } &
\begin{array}{c}
\mbox{column }i \\
\downarrow \quad
\end{array}
&  &  &  &  &
\end{array}
\\
T(Q) &=&%
\begin{array}{c}
\\
\mbox{row }h\rightarrow  \\
\\
\\
\mbox{row }l\rightarrow  \\
\end{array}%
\left(
\begin{array}{cccccc}
0 & \mathcal{I} & \ast  & \ldots  & \ast  & \ldots  \\
\vdots  & \vdots  & \vdots  &  & \ldots  & \ldots  \\
\vdots  & \vdots  & \vdots  & \mathcal{I} & \ldots  & \ldots  \\
\vdots  & \vdots  & \vdots  & \vdots  &  &  \\
0 & \ast  & 0 & \ast  &  &  \\
0 & \ast  & \ast  & \ast  &  &
\end{array}%
\right) .
\end{eqnarray*}
\end{singlespace}

\begin{enumerate}
\item[Case 1] The $h$-th and $l$-th row vectors of $T_{c^{\prime }}(Q^{\prime })$ are
nonzero vectors.
%We claim that the $i$-th column vector or the last column
%vector of $T_{c,g}(Q)$ is not in the column space of $T_{c^{\prime
%}}(Q^{\prime })$. This is because of the following facts.

Consider the following two submatrices%
\begin{equation*}
M_{1}=\left(
\begin{array}{cc}
T_{c,g}(Q)_{hi} & T_{c,g}(Q)_{h2^{k}} \\
T_{c,g}(Q)_{li} & T_{c,g}(Q)_{l2^{k}}%
\end{array}%
\right) ,M_{2}=\left(
\begin{array}{ccc}
T_{c^{\prime }}(Q^{\prime })_{h1} & ... & T_{c^{\prime }}(Q^{\prime
})_{h2^{k}} \\
T_{c^{\prime }}(Q^{\prime })_{l1} & ... & T_{c^{\prime }}(Q^{\prime
})_{l2^{k}}%
\end{array}%
\right) .
\end{equation*}%
By construction that $T(Q')_{hi} = T(Q')_{li}$ for all $i$, all column vectors of $M_{2}$ are proportional to each
other. In what follows, we identify one column of $T_{c,g}(Q)$ that is not in the column space of $T_{c'}(Q')$. Also, it is useful to keep in mind that the $2^{k}$-th (last) column of $T$ corresponds to the attribute profile $(1,...,1)$.

\begin{itemize}
\item[a1] If $T(Q)_{li}=0$ and $T(Q)_{hi}=1$, then
$T_{c,g}(Q)_{hi}=T_{c,g}(Q)_{h2^{k}}$. Since $c\ncong g$, we
obtain that $T_{c,g}(Q)_{li}\neq T_{c,g}(Q)_{l2^{k}}.$ There are
two situations:

\begin{itemize}
\item [b1] $T_{c,g}(Q)_{hi}=T_{c,g}(Q)_{h2^{k}}\neq 0$. It is
straightforward to see that the column space of $M_{2}$ does not
contain both column vectors of $M_{1}$. This is because
$T_{c,g}(Q)_{hi}=T_{c,g}(Q)_{h2^{k}}\neq 0$ and
$T_{c,g}(Q)_{li}\neq T_{c,g}(Q)_{l2^{k}}$ imply that the two
column vectors of $M_{1}$ are not proportional to each other. Then,
either the $i$-th column or the $2^{k}$-th column of $T_{c,g}(Q)$
is not in the column space of $T_{c'}(Q')$.

\item [b2] $T_{c,g}(Q)_{hi}=T_{c,g}(Q)_{h2^{k}}=0$.
$T_{c,g}(Q)_{li}\neq T_{c,g}(Q)_{l2^{k}}$ implies that at least
one of them is nonzero. Suppose that $T_{c,g}(Q)_{li} \neq 0$,
then the $i$-th column of $T_{c,g}(Q)$ is not in the column space
of $T_{c'}(Q')$. This is because the $h$-th row of $T_{c^{\prime
}}(Q^{\prime })$ is not a zero vector and any vector of the form
\begin{equation}
\left(
\begin{array}{c}
0\\nonzero
\end{array}
\right)
\end{equation}
is not in the column space of the $M_{2}$. Similarly, if $T_{c,g}(Q)_{l2^{k}} \neq 0$, then the $2^{k}$-th column is identified.

\end{itemize}

\item[a2]
If $T(Q)_{li}=1$ and $T(Q)_{hi}=0$, then $T_{c,g}(Q)_{li}=T_{c,g}(Q)_{l2^{k}}
$. Note that row $h$ corresponds to a combination of items (or just one item)
each of which only requires one attribute. Therefore, we may choose column $i
$ such that the corresponding attribute is capable of answering all items in row $h$
except for one. With this construction, if $T_{c,g}(Q)_{hi}= T_{c,g}(Q)_{h2^k}$, then they must be both zero (most of the time $T_{c,g}(Q)_{hi}$ and  $T_{c,g}(Q)_{h2^k}$ are distinct). We consider three situations:
    \begin{itemize}
    \item [c1]$T_{c,g}(Q)_{hi}\neq T_{c,g}(Q)_{h2^{k}}$. Similar to a1, the conclusion is straightforward.

    \item [c2]$T_{c,g}(Q)_{hi}=T_{c,g}(Q)_{h2^{k}}=0$ and $T_{c,g}(Q)_{li}=T_{c,g}(Q)_{l2^{k}}\neq 0$. Similar to b2, since the $h$-th row vector of $T_{c^{\prime }}(Q^{\prime })$ is nonzero, the statement of the proposition also holds.

    \item [c3]$T_{c,g}(Q)_{hi}=T_{c,g}(Q)_{h2^{k}}=0$ and $
    T_{c,g}(Q)_{li}=T_{c,g}(Q)_{l2^{k}}=0$. This situation is slightly complicated, since $M_{1}$ is a zero matrix and we have to seek for a different column other than the columns $i$ and $2^k$. In what follows, all the item-attribute relationship refers to $Q$. If the item in the $l$-th row does not require strictly fewer attributes than the items in row $h$, then, we are able to find a column as in a1.

    \bigskip

    Otherwise, the item in the $l$-th DOES require strictly fewer attributes than the items in row $h$. Without loss of generality, assume that the item corresponding to the $l$-th row requires attribute $1, 2,..., j$, and the $h$-th row corresponds to items $1, 2,...,j,...,j'$.
    Suppose that for all $i' =1,...,2^k$ $T_{c,g}(Q)_{li'}=0$ implies $T_{c,g}(Q)_{hi'}=0$ (otherwise the $i'$-th column is not in the column space of $T_{c'}(Q)$, c.f. b2).
    %This implies that there exists one $1\leq j''\leq j$ so that $c_{j''}=0$.
    By slightly abusing notation, we let $c_l=0$ be the correct answering parameter and $g_l\neq 0$ be the guessing parameter of the item in the $l$-th row of $T_{c,g}(Q)$.
    Let $\AA = \ee_{j'}$. Then, the $\AA$-th element of the $l$-th row is $g_l\neq 0$ (equivalently, $\AA$ is NOT able to answer that item).

    \bigskip

    \begin{itemize}
    \item[d1]Suppose that the $\AA$-th element
    of the $h$-th row of $T_{c,g}(Q)$ is non-zero. Let $\mathbf 0 = (0,...,0)$ be the (zero) attribute
    that has precisely one few attribute than $\AA$. Then, the $\mathbf 0$-th element of the $l$-th row of $T_{c,g}(Q)$ equals the $\AA$-th element of that row (being $g_l$). The $\mathbf 0$-th and the $\AA$-th elements of the $h$-th row of $T_{c,g}(Q)$ are different. This is because the $\mathbf 0$-th and the $\AA$-th elements of the $h$-th row are
$$\prod_{i'=1}^{j'} g_{i'},\qquad c_{j'}\prod_{i'=1}^{j'-1} g_{i'}.$$
    Thereby, we can identify the vector from either the $\AA$-th or the $\mathbf 0$-th column vector. (Note that the $\mathbf 0$-th column of $T_{c,g}(Q)$ is the first column, which is not a zero vector.)
    \item[d2] Suppose that the $\AA$-th element of the $h$-th row of $T_{c,g}(Q)$ is zero. Then, the $\AA$-th column is not in the column space of $T_{c'}(Q)$, because its $l$-th element is nonzero and the $h$-th element is zero (c.f. b2).
    \end{itemize}
    %
%    \bigskip
%
%
%    Suppose that we cannot find a column $i$ such that any of the first two
%    situations in the current paragraph occurs (otherwise, we will use the
%    previous analysis in the paragraph). $T_{c,g}(Q)_{h2^{k}}=0$ implies that
%    there exists a group of items whose $c_{i^{\prime }}$'s are zero We denote
%    these items by set $Z$. In addition, if $A$ is able to answer the item in
%    row $l$, then the $A$-th element of the $l$-th row of $T_{c,g}(Q)$ must be
%    zero. Since $Q$ has at least two columns, then there are at least two
%    non-zero elements in the $l$-th row of $T_{c,g}(Q)$. Suppose that there
%    exists an $A$ such that the $A$-th element of the $l$-the row is non-zero
%    (equivalently, $A$ is NOT able to answer that item) and the $A$-th element
%    of the $l$-the row of $T_{c,g}(Q)$ is non-zero. Let $A_{-1}$ be an attribute
%    that has precisely one few attribute than $A$. Then, the $A_{-1}$-th element
%    of the $l$-th row of $T_{c,g}(Q)$ equals the $A$-th element of that row
%    (nonzero). The $A_{-1}$-th and the $A$-th elements of the $h$-th row of $%
%    T_{c,g}(Q)$ are different. Thereby, we conclude this case.
    \end{itemize}

\end{itemize}

\item[Case 2] Either the $h$-th or $l$-th row vector of $T_{c'}(Q')$ is a zero vector. Since both the $h$-th and $l$-th rows of
$T_{c,g}(Q)$ are nonzero vectors, we are always able to identify a column in $T_{c,g}(Q)$ that is not in the column space of $T_{c'}(Q^{\prime })$.
\end{enumerate}
\end{proof}

\bigskip

\begin{proof}[Proof of Proposition \ref{PropIncomp}]
%$T(\cdot )$ is arranged as in \eqref{A}.
\subsubsection*{Step 1}
We first identify two row vectors such that they are identical in $T(Q')$ but distinct in $T(Q)$. It turns out that
we only need to consider the first $k$ items. Consider $Q'$ such
that $Q'_{1:k}$ is incomplete. We discuss the following situations.
\begin{enumerate}
\item There are two row vectors, say the $i$-th and $j$-th row vectors ($1\leq i,j \leq
k$), in $Q_{1:k}^{\prime }$ that are identical. Equivalently, two
items require exactly the same attributes according to $Q'$. Then,
the row vectors in $T(Q^{\prime })$ corresponding to these two items
are identical. All of the first $2^k - 1$ row vectors in  $T(Q)$
must be different, because $T(Q)_{1:(2^{k}-1)}$ has rank $2^{k}-1$.

\item No two row vectors in $Q_{1:k}^{\prime }$ are
identical. Then, among the first $k$ rows of $Q^{\prime }$ there is
at least one row vector containing two or more non-zero entries.
That is, there exists $1\leq i\leq k$ such that
\begin{equation*}
\sum_{j=1}^{k}Q_{ij}^{\prime }>1.
\end{equation*}%
This is because if each of the first $k$ items requires only one
attribute and $Q_{1:k}^{\prime }$ is not complete, there are at
least two items that require the same attribute. Then, there are two
identical row vectors in $ Q_{1:k}^{\prime }$ and it belongs to the
first situation. We define
\begin{equation*}
a_{i}=\sum_{j=1}^{k}Q_{ij}^{\prime },
\end{equation*}%
the number of attributes required by item $i$ according to $Q'$.

Without loss of generality, assume $a_{i}>1$ for $i=1,...,n$ and
$a_{i}=1$ for $i=n+1,...,k$. Equivalently, among the first $k$
items, only the first $n$ items require more than one attribute
while the $(n+1)$-th through the $k$-th items require only one
attribute each, all of which are distinct. Without loss of
generality, we assume $Q_{ii}^{\prime }=1$ for $i=n+1,...,k$ and
$Q_{ij}=0$ for $i=n+1,...,k$ and $i\neq j$.

\begin{enumerate}
\item \label{a} $n=1$. Since $a_{1}>1$, there exists $i>1$ such that $%
Q'_{1i}=1$. Then, the row vector in $T(Q^{\prime })$ corresponding
to $I_{1}\wedge
I_{i}$ (say, the $l$-th row in $T(Q^{\prime })$) and the row vector of $%
T(Q^{\prime })$ corresponding to $I_{1}$ are identical. On the other
hand, the first row and the $l$-th row are different for $T(Q)$
because $T(Q)_{1:(2^{k}-1)}$
is a full-rank matrix. The above statement can be written as%
\begin{equation*}
B_{Q^{\prime }}(I_{1}\wedge I_{i})=B_{Q^{\prime }}(I_{1}),\quad
B_{Q}(I_{1}\wedge I_{i})\neq B_{Q}(I_{1}).
\end{equation*}

\item $n>1$ and there exists $j>n$ and $i\leq n$ such that $Q'_{ij}=1$. Then by
the same argument as in \eqref{a}, we can find two rows that are
identical in $
T(Q^{\prime })$ but different in $T(Q)$. In particular,%
\begin{equation*}
B_{Q^{\prime }}(I_{j}\wedge I_{i})=B_{Q^{\prime }}(I_{i}),\quad
B_{Q}(I_{j}\wedge I_{i})\neq B_{Q}(I_{i}).
\end{equation*}

\item $n>1$ and for each $j>n$ and $i\leq n$, $Q'_{ij}=0$. Let the $i^{\ast }$-th
row in $T(Q')$ correspond to $I_{1}\wedge ...\wedge I_{n}$. Let
the $i_{h}^{\ast }$-th row in $T(Q')$ correspond to $I_{1}\wedge
...\wedge I_{h-1}\wedge I_{h+1}\wedge ...\wedge I_{n}$ for
$h=1,...,n$.

We claim that there exists an $h$ such that the $i^{\ast }$-th row and the $%
i_{h}^{\ast }$-th row are identical in $T(Q^{\prime })$, that is%
\begin{equation}\label{n1}
B_{Q^{\prime }}(I_{1}\wedge ,...,\wedge I_{h-1}\wedge I_{h+1}\wedge
,...,\wedge I_{n})=B_{Q^{\prime }}(I_{1}\wedge ,...,\wedge I_{n}).
\end{equation}
We prove this claim by contradiction. Suppose that there does not
exist such an $h$. This is equivalent to saying that for each $j\leq
n$ there exists an $\alpha _{j}$ such that $Q'_{j\alpha _{j}}=1$ and
$Q'_{i\alpha _{j}}=0$ for all $1\leq i\leq n$ and $i\neq j$.
Equivalently, for each $j\leq n$, item $j$ requires at least one
attribute that is not required by other first $n$ items. Consider
$$\mathcal{C}_{i}=\{j:\mbox{there exists $i\leq i' \leq n$ such that $Q'_{i' j}=1$}\}.$$
Let $\#(\cdot )$ denote the cardinality of a set. Since for each
$i\leq n$ and $j>n$, $Q'_{ij}=0$, we have that
$\#(\mathcal{C}_{1})\leq n$. Note that $Q'_{1\alpha _{1}}=1$ and
$Q'_{i\alpha _{1}}=0$ for all $2\leq i\leq n$. Therefore, $\alpha
_{1}\in \mathcal{C}_{1}$ and $\alpha_1 \notin \mathcal{C}_{2}$.
Therefore, $\#(\mathcal{C}_{2})\leq n-1$. By a similar argument and
induction, we have that $a_{n}=\#(\mathcal{C}_{n})\leq 1 $. This
contradicts the fact that $a_{n}>1$. Therefore, there exists an $h$
such that \eqref{n1} is true. As for $T(Q)$, we have that
$$B_{Q}(I_{1}\wedge ,...,\wedge I_{h-1}\wedge I_{h+1}\wedge
,...,\wedge I_{n})\neq B_{Q}(I_{1}\wedge ,...,\wedge I_{n}).$$
\end{enumerate}

\end{enumerate}

\subsubsection*{Step 2}

For the situations 1, 2a, and 2b, the identification of the column vector is completely identical to that of the Proposition \ref{PropComp}. For those three situations, we essentially  identified one row corresponding to a single item and another row corresponding to a combination of single-attribute items. We need to provide additional proof for situation 2c, that is, the follow-up analysis whence \eqref{n1} is established. Without loss generality, we assume that
\begin{equation}
B_{Q^{\prime }}(I_{1}\wedge ,...,\wedge I_{n-1})=B_{Q^{\prime }}(I_{1}\wedge ,...,\wedge I_{n}), \quad B_{Q}(I_{1}\wedge ,...,\wedge I_{n-1})\neq B_{Q}(I_{1}\wedge ,...,\wedge I_{n}).
\end{equation}
Let $h$ be the row corresponding to $I_{1}\wedge ,...,\wedge I_{n-1}$ and $l$ be the row to $I_{1}\wedge ,...,\wedge I_{n}$.

\begin{itemize}
\item[a] Both the $h$-th and the $l$-th row of $T_{c'}(Q')$ are nonzero.
Among the first $2^n$ elements of $B_{c,g,Q}(I_{1}\wedge ,...,\wedge I_{n-1})$ there exists a nonzero element, say corresponding to attribute $\AA$. Let $\AA'$ be the attribute identical to $\AA$ for the first $n-1$ attributes and their $n$-th elements are different. Then, the $\AA$-th and $\AA'$-th elements of $B_{c,g,Q}(I_{1}\wedge ,...,\wedge I_{n-1})$ are identical (and nonzero).  The $\AA$-th and $\AA'$-th elements of $B_{c,g,Q}(I_{1}\wedge ,...,\wedge I_{n})$ must be different. This is because $\AA$-th and $\AA'$-th elements of $B_{c,g,Q}(I_{1}\wedge ,...,\wedge I_{n})$ are the products of the corresponding elements in $B_{c,g,Q}(I_{1}\wedge ,...,\wedge I_{n-1})$ with $c_l$ and $g_l$ respectively and $c_l \neq g_l$. Then, either the $\AA$-th or the $\AA'$-th column of $T_{c,g}(Q)$ is not in the column space of $T_{c'}(Q')$.

%Otherwise, If the $\AA$-th and $\AA'$-th elements of $B_{c,g,Q}(I_{1}\wedge ,...,\wedge I_{n})$ are the same, then they must be both zero. This is because the capabilities of $\AA$ and $\AA'$ with respect to item $n$ are different. Since both the $h$-th and the $l$-th row of $T_{c'}(Q')$ are nonzero, neither the $\AA$-th nor the $\AA'$-th column of $T_{c,g}(Q)$ is in the column space of $T_{c'}(Q')$.

\item[b] Either the $h$-th or the $l$-th row of $T_{c'}(Q')$ is a zero vector. The identification of the column vector is straightforward.
\end{itemize}
%it is sufficient to show that there exist two row
%vectors ($h$ and $l$) in $T(Q^{\prime })$ which are identical, that
%is, $T(Q^{\prime })_{hj}=T(Q^{\prime })_{lj}$ for $j=1,...,2^{k}-1$,
%while the $h$-th and $l$-th row vectors in $T(Q)$ are different.
%Once we have identified two such  row vectors, we have the following
%two cases.
%\begin{enumerate}
%\item[Case 1]
%The $h$-th and $l$-th row vectors of $T_{c'}(Q')$ are nonzero
%vectors. Consider the following two submatrices%
%\begin{equation*}
%M_{1}=\left(
%\begin{array}{cc}
%T_{c,g}(Q)_{hi} & T_{c,g}(Q)_{h2^{k}} \\
%T_{c,g}(Q)_{li} & T_{c,g}(Q)_{h2^{k}}%
%\end{array}%
%\right) ,M_{2}=\left(
%\begin{array}{cc}
%T_{c^{\prime }}(Q^{\prime })_{hi} & T_{c^{\prime }}(Q^{\prime })_{h2^{k}} \\
%T_{c^{\prime }}(Q^{\prime })_{li} & T_{c^{\prime }}(Q^{\prime })_{h2^{k}}%
%\end{array}%
%\right) .
%\end{equation*}%
%With exactly the same argument, $M_2$ is always degenerate and $M_1$ is not.
%
%\item[Case 2] Either the $h$-th or the $l$-th row vector of $T_{c'}(Q')$ is a zero vector.   Since both the $h$-th and $l$-th rows of
%$T_{c,g}(Q)$ are nonzero vectors, we are always able to identify a column in $T_{c,g}(Q)$ that is not in the column space of $T_{c'}(Q^{\prime })$.
%\end{enumerate}
\end{proof}

\bibliographystyle{chicago}
\bibliography{bibstat}

%\end{doublespace}

\end{document}